\documentclass[letterpaper, 10pt, conference]{ieeeconf}
\IEEEoverridecommandlockouts
\usepackage{textcomp}
\usepackage[utf8]{inputenc}
\usepackage[T1]{fontenc}
\usepackage{url}
\usepackage{booktabs}
\usepackage{amsmath,amssymb,amsfonts}

\usepackage{amsthm}
\usepackage{mathtools}
\usepackage{algorithmic}
\usepackage{algorithm}
\usepackage{graphicx}
\usepackage{float}
\usepackage{placeins}
\usepackage{dblfloatfix}
\usepackage{subcaption}
\usepackage{caption}
\usepackage{xcolor}
\usepackage{microtype}
\let\labelindent\relax
\usepackage{enumitem}
\usepackage{siunitx}
\usepackage{nicefrac}

\graphicspath{{figures/}}

\AtBeginDocument{%
  \abovedisplayskip=4pt plus 2pt minus 2pt
  \belowdisplayskip=4pt plus 2pt minus 2pt
  \abovedisplayshortskip=2pt plus 1pt
  \belowdisplayshortskip=2pt plus 1pt
}
\setlist{nosep,leftmargin=*}
\makeatletter
\let\@afterindenttrue\@afterindentfalse
\makeatother
\DeclareRobustCommand{\Seta}{\hyperref[eq:stopped_cost_exact_compact]{\mathcal{L}_\eta}}
\DeclareRobustCommand{\Setah}{\hyperref[eq:stopped_cost_num_compact]{\mathcal{L}_{\eta,h}}}
\DeclareRobustCommand{\ePhis}{\hyperref[eq:strong_indicator_error]{e_{\Phi}^{\mathrm{s}}}}
\DeclareRobustCommand{\ePhiw}{\hyperref[eq:weak_indicator_error]{e_{\Phi}^{\mathrm{w}}}}
\DeclareRobustCommand{\krate}{\hyperref[eq:leading_rate]{\kappa}}
\DeclareRobustCommand{\arate}{\hyperref[eq:Lq_exit_rate]{\alpha}}
\DeclareRobustCommand{\brate}{\hyperref[eq:terminal_tube]{\beta}}
\DeclareRobustCommand{\grate}{\hyperref[eq:mesh_point_rate]{\gamma}}

\newtheorem{sassumption}{\textbf{}}

\newtheorem{theorem}{Theorem}
\newtheorem{remark}{\textbf{Remark}}

\newtheorem{corollary}{Corollary}

\usepackage{eso-pic}
\usepackage[hidelinks]{hyperref}
\makeatletter
\def\@citex[#1]#2{\leavevmode
  \let\@citea\@empty
  \@cite{\@for\@citeb:=#2\do
    {\@citea\def\@citea{],\penalty\@m\ [}%
     \edef\@citeb{\expandafter\@firstofone\@citeb\@empty}%
     \if@filesw\immediate\write\@auxout{\string\citation{\@citeb}}\fi
     \hyperlink{cite.\@citeb}{\@ifundefined
       {b@\@citeb}{\hbox{\reset@font\bfseries ?}%
       \G@refundefinedtrue
       \@latex@warning
         {Citation `\@citeb' on page \thepage \space undefined}}%
       {\csname b@\@citeb\endcsname}}}}{#1}}
\let\oldbibitem\bibitem
\renewcommand{\bibitem}[1]{\oldbibitem{#1}\hypertarget{cite.#1}{}}
\makeatother

\DeclareFontFamily{OMS}{cmsy}{\skewchar\font48 }
\DeclareFontShape{OMS}{cmsy}{m}{n}{%
      <5>gen*cmsy%
      <6>cmsy7%
      <7><8><9><10>gen*cmsy%
      <10.95><12><14.4><17.28><20.74><24.88>cmsy10%
      }{}
\DeclareFontShape{OMS}{cmsy}{b}{n}{%
      <5><6><7><8><9>gen*cmbsy%
      <10><10.95><12><14.4><17.28><20.74><24.88>cmbsy10%
      }{}
\begin{document}
\title{\LARGE \bf Higher-Order Approximation of Exit Functionals in Sampling-Based Stochastic Model Predictive Control}

\newcommand{\fundingstatement}{This work is supported by DARPA COMPASS HR0011-25-3-0210 and AFOSR DSCT FA9550-25-1-0347.}

\author{Sashank Modali and Takashi Tanaka%
\thanks{\fundingstatement\
        S.~Modali and T.~Tanaka are with the School of Aeronautics and Astronautics,
        Purdue University, IN 47906.
        {\tt\small \{nmodali, tanaka16\}@purdue.edu}}%
}

\maketitle
\AddToShipoutPictureBG*{\AtPageLowerLeft{%
  \hspace*{\dimexpr 1in+\oddsidemargin\relax}%
  \raisebox{22pt}[0pt][0pt]{\parbox[b]{\textwidth}{\scriptsize
  \textcopyright{} 2026 IEEE. Personal use of this material is permitted. Permission from IEEE must be obtained for all other uses, in any current or future media, including reprinting/republishing this material for advertising or promotional purposes, creating new collective works, for resale or redistribution to servers or lists, or reuse of any copyrighted component of this work in other works.}}}}
\thispagestyle{empty}
\pagestyle{empty}
\begin{abstract}
Safety evaluation in sampling-based stochastic model predictive control often requires numerical estimation of exit functionals. The approximation of first-exit times and exit indicators is therefore a key numerical bottleneck, and discretization error in these quantities directly affects the resulting controller. This paper studies how existing higher-order methods for strong approximation of exit times can be brought into safe control. Two cases are highlighted. For general noncommutative dynamics, an adaptive order-1 Milstein discretization is used together with L\'evy-area simulation via Wiktorsson's method. For commutative dynamics, an adaptive order-1.5 construction achieves a stronger exit-time rate. Under a local anti-concentration condition on the exit-time law, we show that strong exit-time approximation transfers to strong approximation of the failure indicator. The methods are then studied in the context of chance-constrained path integral control, which provides an exact continuous-time representation of safety through exit events. Numerical experiments compare the two cases in terms of strong exit-time error, failure-indicator error, and closed-loop constraint satisfaction, showing improvement over Euler--Maruyama and thereby enabling existing and future techniques whose applicability depends on improved strong approximation.\end{abstract}

\section{INTRODUCTION}
\label{sec:introduction}

A system that underestimates its risk of failure cannot be trusted to operate safely. Many sampling-based stochastic control methods~\cite{schildbach2014scenario,lew2024risk} enforce safety by sampling trajectories and evaluating exit-events over a prescribed safe set. In path integral control~\cite{kappen2005path,theodorou2010generalized}, chance-constrained formulations~\cite{patil2022chance,patil2025strong} can also represent the optimal failure probability exactly using exit-event indicators. The safety guarantee is therefore only as accurate as the numerical approximation of this indicator. For continuous-time dynamics, evaluating it requires time discretization (step size $h$) of the exit event, which is discontinuous and sensitive to boundary crossing~\cite{giles2018stopped,giles2023discontinuous}.

Under Euler--Maruyama, a discretized trajectory may cross the boundary between mesh points without triggering exit, resulting in decreased accuracy. Gobet and Menozzi~\cite{gobet2010stopped} recover first-order weak accuracy ($O(h)$ exit functional bias) via an $O(\sqrt{h})$-boundary correction, but the strong accuracy (pathwise approximation error of exit functional) remains at $O(\sqrt{h})$. Strong convergence governs the variance of multi-level estimators~\cite{giles2018stopped}. It is rate-limiting, and thus indispensable for accurate risk estimation in safety-critical applications.

Recently, Hoel and Raghunathan (HR)~\cite{hoel2024adaptive} proposed two strong It\^o--Taylor methods that adaptively refine $h$ near the boundary, achieving strong exit-time rates $O(h^{1-\xi})$ and $O(h^{3/2-\xi})$, respectively, for all $\xi>0$, under a restrictive commutativity condition on the noise. However, two issues remain. First, the object entering
the safety specification in control is not the exit time itself but
an exit indicator or a stopped cost functional, and the
transfer of strong error rates to these quantities has not been addressed. Second, stochastic control routinely
involves noncommutative diffusions, for which L\'evy-area
simulation is required but exit-time convergence has not
been established. This paper bridges both gaps and
establishes consequences for closed-loop safe control.

The main contributions are as follows.
\begin{enumerate}
    \item We identify a terminal boundary-layer condition under which exit-time and state approximation rates transfer to the failure indicator and the relaxed stopped functional. The transfer is method-agnostic, depending only on the exit-time rate, state rate, and a terminal tube exponent $\brate$.

    \item We bridge the adaptive higher-order methods of~\cite{hoel2024adaptive} into safe stochastic optimal control. For noncommutative dynamics, the order-\(1\) Milstein discretization with Wiktorsson L\'evy areas~\cite{wiktorsson2001joint} achieves exit-time rate \(O(h^{1-\xi})\) and indicator rate \(O(h^{(1-\xi){\brate}/(1+{\brate})})\). For commutative dynamics, the order-\(1.5\) construction achieves \(O(h^{3/2-\xi})\) and \(O(h^{(3/2-\xi){\brate}/(1+{\brate})})\), yielding \(3/4-\varepsilon\) at \(\brate=1\). These rates carry over to the relaxed stopped cost.

    \item We specialize to chance-constrained path integral control (CC-PIC) and show that the closed-loop failure probability of the optimal policy inherits the same improved rates. Experiments on a controlled unicycle confirm the predicted orders and demonstrate tighter constraint satisfaction compared to Euler--Maruyama (EM).
\end{enumerate}

The paper is organized as follows. Section~\ref{sec:problem} formulates the problem. Section~\ref{sec:discretization_error} establishes the transfer from exit-time error to indicator and stopped-functional error. Sections~\ref{sec:higher_order_methods}--\ref{sec:ccpic} present the higher-order methods and their specialization to path-integral control. Section~\ref{sec:results} reports experiments.

\subsection{Notation.} 
Capital italic letters denote random variables ($\boldsymbol X_t$, $\boldsymbol W_t$), bold letters denote matrices and vectors ($\mathbf F$, $\mathbf B$, $\mathbf G$), and superscript $\eta$ denotes dependence on the multiplier ($V^\eta$, $\psi^\eta$). Numerical counterparts carry a discretization superscript $h$ ($\bar{\boldsymbol X}^h$, $\bar Y^h$). The horizon $T$ is deterministic. $\overline D$ denotes the closure of $D$. 
\section{PROBLEM FORMULATION}
\label{sec:problem}

We consider a controlled diffusion on $(\Omega,\mathcal F,\{\mathcal F_t\}_{t\ge 0},\mathbb P)$,
driven by an \(m\)-dimensional Wiener process \(\boldsymbol W_t\). The state
\(\boldsymbol X_t\in\mathbb R^d\) evolves according to
\begin{align}
\label{eq:controlled_sde}
d\boldsymbol X_t
=
\mathbf F(\boldsymbol X_t,\mathbf u_t)\,dt
+
\mathbf B(\boldsymbol X_t)\,d\boldsymbol W_t,
\end{align}
with initial condition \(\boldsymbol X_0=\mathbf x_0\in D\), where \(\mathbf u_t\in\mathbb R^p\) is an admissible control process,
\(\mathbf F:\mathbb R^d\times\mathbb R^p\to\mathbb R^d\) is the controlled drift,
\(\mathbf B:\mathbb R^d\to\mathbb R^{d\times m}\) is the diffusion coefficient,
\(D\subset\mathbb R^d\) is the prescribed safe set (see Section~\ref{sec:regularity} for regularity conditions), and \(T>0\) is the time horizon.

For \(r>0\), define the tubular neighborhood \(D_r:=\{\mathbf x\in\mathbb R^d:\operatorname{dist}(\mathbf x,D)<r\}\). Following~\cite{hoel2024adaptive}, fix \(\bar R_D>0\), a scalar, and let \(\boldsymbol\mu:D\to\mathbb R^p\) be a time-invariant nominal feedback, so that the closed-loop drift \(\mathbf F^{\mu}(\boldsymbol x):=\mathbf F(\boldsymbol x,\boldsymbol\mu(\boldsymbol x))\) is time-homogeneous. The optimal feedback is denoted by \(\boldsymbol\mu^*_\eta:\overline D \times[0,T]\to\mathbb R^p\) for each multiplier \(\eta\ge 0\) introduced in \eqref{eq:lagrangian_compact}. The following conditions are imposed on the nominal closed-loop coefficients.

\subsection{Standing regularity conditions}
\label{sec:regularity}

The following conditions are those of~\cite[Assumptions~A and~B]{hoel2024adaptive}, under which the exit-time strong approximation rates $\mathcal{O}(h^{1-\xi})$ and $\mathcal{O}(h^{3/2-\xi})$ are established.

\begin{enumerate}
\item
\(D\subset\mathbb R^d\) is bounded and \(\partial D\) is \(C^4\), with outward unit normal \(n_D\in C^3\), uniform ball property, and $\pi_r(\mathbf y)=\mathbf y+r\,n_D(\mathbf y)$ a \(C^3\)-diffeomorphism $\partial D\to \partial D_r$ for every \(r\in[0,\bar R_D]\).

\item
The nominal closed-loop drift \(\mathbf F^{\mu}(\boldsymbol x)=\mathbf F(\boldsymbol x,\boldsymbol\mu(\boldsymbol x))\) and \(\mathbf B(\boldsymbol x)\) are \(C^3\) on \(D_{\bar R_D}\) and admit \(C^3\) extensions to an open neighborhood of \(\overline{D_{\bar R_D}}\).

\item
(Uniform ellipticity.) There exist \(0<\hat c_b\le\bar C_b\) such that $\hat c_b |\xi|^2 \le \xi^\top (\mathbf B\mathbf B^\top)\xi \le \bar C_b |\xi|^2$ for all \((\mathbf x,\xi)\in D_{\bar R_D}\times\mathbb R^d\).
\end{enumerate}

\subsection{Exit time and safety functional}

Safety is represented through the first exit time from the safe set,
\begin{equation}
\label{eq:raw_exit_time}
\tau_D := \inf\{t\ge 0 : \boldsymbol{X}_t \notin D\},
\end{equation}
with \(\tau_D=\infty\) if the process never exits \(D\). The failure indicator and probability under control \(\mathbf u\) are
\begin{align}
\label{eq:safety_functional}
\Phi_T(\boldsymbol{X}) &:= \mathbf 1\{\tau_D<T\}, \\
\label{eq:failure_probability}
p(\mathbf u) &:= \mathbb E\big[\Phi_T(\boldsymbol X^{\mathbf u})\big] = \mathbb P(\tau_D^{\mathbf u}<T)
\end{align}
Define the clipped exit time $\tau := \tau_D\wedge T$ and the stopped state $\boldsymbol X_\tau := \boldsymbol X_{\tau_D\wedge T}$. Following~\cite{hoel2024adaptive}, strong approximation results are stated for $\tau$, while $\tau_D$ is required to define $\Phi_T$.

\subsection{Safe stochastic optimal control}
Safety can be enforced through various constraint formulations, including expectation constraints, event-probability bounds, and risk measures. In this paper, we restrict attention to the subclass where the fixed Lagrange multiplier problem admits a classical HJB characterization~\cite{PfeifferTanZhou2021,BouchardNutz2012}. Let \(\ell(\mathbf{x},\mathbf{u},t)\) be the running cost and \(\phi_1,\phi_2\) be terminal penalties for failure and survival, respectively. The stopped cost is
\begin{align}
\label{eq:stopped_cost}
J(\mathbf{u})
=
\mathbb E\Big[
&\phi_1\,\mathbf 1\{\tau_D<T\}
+
\phi_2(\boldsymbol{X}_T)\,\mathbf 1\{\tau_D\ge T\} \nonumber\\
&+ \int_0^{\tau} \ell(\boldsymbol{X}_s,\mathbf{u}_s,s)\,ds
\Big]
\end{align}
The safety-constrained stochastic optimal control problem is
\begin{equation}
\label{eq:generic_safe_soc}
\begin{aligned}
\min_{\mathbf{u}}\quad & J(\mathbf{u}) \\
\text{s.t.}\quad & \mathcal S(\mathbf{u}) \le 0
\end{aligned}
\end{equation}
where \(\mathcal S(\mathbf{u}) :=\mathbb E[ \Gamma(\boldsymbol X^{\mathbf u})]-\Delta\) is a trajectory-level safety functional with \( \Gamma\) having the same structure as~\eqref{eq:stopped_cost} and a prescribed threshold \(\Delta>0\).
Introducing a multiplier \(\eta\ge 0\) for the constraint \(\mathcal S(\mathbf u)\le 0\), the Lagrangian is
\begin{equation}
\label{eq:lagrangian_compact}
\mathcal L(\mathbf u,\eta)
:=
J(\mathbf u)+\eta\,\mathcal S(\mathbf u)
\end{equation}
The dual function is $g(\eta):=\inf_{\mathbf u}\mathcal L(\mathbf u,\eta)$ and the dual problem is $\sup_{\eta\ge 0}g(\eta)$. For each fixed $\eta$, the inner minimization is an unconstrained stopped-cost problem. The constraint penalty $\eta\,\mathcal S(\mathbf u)$ is absorbed into augmented terminal and running costs $\psi^\eta$ and $\ell^\eta$. The augmented terminal cost is
\begin{equation}
\label{eq:augmented_terminal}
\psi^\eta = \psi_1^\eta\,\mathbf 1\{\tau_D<T\} + \psi_2^\eta(\boldsymbol X_T)\,\mathbf 1\{\tau_D\ge T\}
\end{equation}
where $\psi_i^\eta=\phi_i+\eta\,\Gamma_i$ for $i=1,2$, with $\Gamma_1, \Gamma_2$ the exit and survival components of $\Gamma$. The value function, denoted by $V^\eta:=V(\mathbf x,t;\eta)$, is given by:
\begin{equation}
\label{eq:value_function}
V^\eta = \inf_{\mathbf{u}}\,
\mathbb E\Big[
\psi^\eta
+\!\int_t^{\tau}\!\ell^\eta(\boldsymbol{X}_s,\mathbf{u}_s,s)\,ds
\;\Big|\;
\boldsymbol{X}_t\!=\!\mathbf{x}
\Big]
\end{equation} The value function is assumed to be a classical solution of the exit-time Hamilton--Jacobi--Bellman equation
\begin{equation}
\label{eq:hjb}
-\partial_t V^\eta
=
\inf_{\mathbf{u}}\big\{
\ell^\eta + \mathbf{F}^\top \partial_{\mathbf{x}} V^\eta
+
\tfrac{1}{2}\operatorname{Tr}\!\big(
\mathbf{B}\mathbf{B}^\top
\partial_{\mathbf{x}}^2 V^\eta
\big)
\big\}
\end{equation}
on \(D\times[0,T)\), with boundary condition \(V^\eta=\psi^\eta\) on \(\partial D\times[0,T)\cup \overline D\times\{T\}\). The associated minimizing feedback \(\boldsymbol\mu^*_\eta(\mathbf x,t):=\mathbf u^*(\mathbf x,t;\eta)\) is taken to be admissible, with the induced closed-loop drift Lipschitz in \(\boldsymbol x\) uniformly in \(t\). This ensures well-posedness of the closed loop diffusion needed for strong duality and classical HJB synthesis (Assumption~\ref{ass:classical}).

Specializing to the survival event \(\mathcal E^{\mathbf u}=\{\tau_D\ge T\}\) gives \(\mathcal S(\mathbf u)=p(\mathbf u)-\Delta\) with \(\Delta\in(0,1)\), so the Lagrangian becomes \(\mathcal L(\mathbf u,\eta)=J(\mathbf u)+\eta(p(\mathbf u)-\Delta)\). We denote $\psi^\eta$ as $\varphi^\eta$ in this case, with $\Gamma_1=1$ and $\Gamma_2=0$.
\subsection{Numerical approximation}
\label{sec:numerical_approx}

In numerical implementation, the value function~\eqref{eq:value_function}, the dual function $g(\eta)$, and the failure probability~\eqref{eq:failure_probability} are evaluated from discretized sampled trajectories of~\eqref{eq:controlled_sde}. Let\label{eq:num_state_def} $\bar{\boldsymbol X}^{h}$ denote a time-discrete approximation on an adaptive mesh $0=t_0<t_1<\cdots$ with resolution parameter $h>0$. The numerical exit time is
\[
\nu_h := \inf\{t_n\in\mathcal T_h : \bar{\boldsymbol X}^h_{t_n}\notin D\}\wedge T
\]
The exact and numerical failure indicators and probabilities are
\begin{align}
\label{eq:exact_indicator}
\Phi_T &= \mathbf 1\{\tau_D<T\}, \qquad
\Phi_{T,h} := \mathbf 1\{\nu_h<T\},
\\
\label{eq:exact_failure_probability}
p &= \mathbb E[\Phi_T], \qquad\qquad
p_h := \mathbb E[\Phi_{T,h}]
\end{align}

\subsection{Assumptions}
\label{sec:assumptions}

The following assumptions are used throughout this paper.

\begin{sassumption}[Strong exit-time approximation rate]
\label{ass:exit_time_rate}
For the nominal closed-loop dynamics and the adaptive
numerical scheme with discretization step size $h$,
\begin{equation}
\label{eq:Lq_exit_rate}
\mathbb E|\tau - \nu_h| \leq C_\tau h^\alpha
\end{equation}
for some $\alpha > 0$ and all sufficiently small $h > 0$.
Under the conditions of Section~\ref{sec:regularity},
this holds with $\alpha = 1 - \xi$ (order~1 method)
or $\alpha = 3/2 - \xi$ (order~1.5 method)
by~\cite[Theorems~2.7 and~2.10]{hoel2024adaptive}.
\end{sassumption}

\begin{sassumption}[Mesh-point strong state approximation]
\label{ass:state_rate}
Let $\bar{\boldsymbol X}^{h}$ denote a time-discrete
approximation of the state process on an adaptive mesh
$\mathcal T_h$ with $T\in\mathcal T_h$. Then,
\begin{equation}
\label{eq:mesh_point_rate}
\mathbb E\Big[\sup_{t_k \in \mathcal{T}_h}
|\boldsymbol X(t_k) - \bar{\boldsymbol X}(t_k)|\Big]
\leq C_X h^\gamma
\end{equation}
for some $\gamma > 0$. Under the conditions of
Section~\ref{sec:regularity}, this holds with
$\gamma = 1$ or $\gamma = 3/2$
by~\cite[Proposition~3.1]{hoel2024adaptive}.
\end{sassumption}

\begin{sassumption}[Cost data regularity]
\label{ass:cost}
The augmented exit penalty \(\psi_1^\eta\) is a constant (state-independent). The augmented survival penalty \(\psi_2^\eta\) admits a bounded extension to \(\mathbb R^d\) and is Lipschitz on \(\overline{D_{\bar R_D}}\). The running cost
\(\ell^\eta(\cdot,\mathbf u,t)\) is bounded on
\(\mathbb R^d\), \(C^3\) and Lipschitz in \((\boldsymbol x,t)\)
on \(D_{\bar R_D}\), all uniformly over admissible
controls and time.
\end{sassumption}

\begin{sassumption}[Terminal tube estimate]
\label{ass:anticoncentration}
There exist \(\delta_0>0\), \(C_T>0\), and \(\beta>0\) such that
\begin{equation}
\label{eq:terminal_tube}
\mathbb P\!\Big(\inf_{t\in[T-\delta,T]}
\operatorname{dist}(\boldsymbol X_t,\partial D)
\le\delta\Big) \le C_T\,\delta^\beta
\qquad\forall\,\delta\in(0,\delta_0]
\end{equation}

\end{sassumption}
This assumption controls the probability that the exact
diffusion enters a thin boundary layer near the horizon $T$. It is critical in enabling the transfer of
exit-time rates to indicator rates. It implies $\mathbb P(|\tau_D - T| \leq
\delta) \leq C\delta^\beta$, which controls the exit-time
distribution near $T$, but this is
not sufficient for our purposes. The terminal tube
estimate is stronger, as it is formulated at the path
level, controlling the disagreement between
exact and numerical indicator functions.
\begin{sassumption}[Classical HJB and duality]
\label{ass:classical}\label{ass:duality}
For each fixed \(\eta\ge 0\), the value function~\eqref{eq:value_function} is a classical solution of~\eqref{eq:hjb}, a minimizing feedback \(\boldsymbol\mu^*_\eta:\overline D\times[0,T]\to\mathbb R^p\) exists and is admissible, and the induced closed-loop drift is Lipschitz in \(\boldsymbol x\) on \(\overline D\), uniformly in \(t\). An optimal multiplier $\eta^*\ge 0$ exists with complementary slackness and the mapping $\eta\mapsto \mathcal S(\mathbf u^*(\cdot;\eta))$ is continuous~\cite{PfeifferTanZhou2021}.
\end{sassumption}

Assumption \ref{ass:duality} ensures that our strong approximation results for the relaxed cost directly matter to the constrained optimal control problem.
\section{DISCRETIZATION ERROR IN EXIT-EVENT ESTIMATION}
\label{sec:discretization_error}

This section defines the numerical error quantities and provides our main theorems relating exit-time approximation and exit-functional approximation. The key quantities for safety are the clipped exit time $\tau$, the failure indicator $\Phi_T$, the stopped state $\boldsymbol X_\tau$, and the augmented terminal cost~\eqref{eq:augmented_terminal}.

\subsection{Strong and weak error for exit-based quantities}

Define the strong exit-time error, strong indicator error, and weak indicator error:
\begin{align}
\label{eq:strong_exit_time_error}
e_\tau(h) &:= \mathbb E|\tau-\nu_h| \\
\label{eq:strong_indicator_error}
e_{\Phi}^{\mathrm{s}}(h) &:= \mathbb E\big|\mathbf 1\{\tau_D<T\}-\mathbf 1\{\nu_h<T\}\big| \\
\label{eq:weak_indicator_error}
e_{\Phi}^{\mathrm{w}}(h) &:= |p-p_h|
\end{align}
Since $\Phi_T,\Phi_{T,h}\in\{0,1\}$, $\ePhis(h) = \mathbb P(\mathbf 1\{\tau_D<T\}\neq \mathbf 1\{\nu_h<T\})$, and $\ePhiw(h) \le \ePhis(h)$ by Jensen's inequality, so a strong bound implies a weak bound of at least the same order.

\subsection{Clipped exit-time to indicator transfer}

A small clipped exit-time error $\mathbb E|\tau-\nu_h|$ does not guarantee a small indicator error $\ePhis(h)$, since the numerical process can exit just before $T$ while the exact process survives, misclassifying safety despite a small $|\tau-\nu_h|$. The terminal tube estimate (Assumption~\ref{ass:anticoncentration}) bounds the probability of this by limiting how often the exact process grazes $\partial D$ near $T$.

\begin{theorem}[Clipped exit-time to indicator transfer]
\label{thm:generic_indicator_rate}
Under Assumptions~\ref{ass:exit_time_rate},
\ref{ass:state_rate}, and~\ref{ass:anticoncentration}, given $\delta_0$,
for every \(0<\delta\le\delta_0\),
\begin{equation}
\label{eq:generic_indicator_three_term}
\ePhis(h)
=
\mathbb E|\Phi_T-\Phi_{T,h}|
\le
C\Big(\delta^{\brate} + \frac{h^{\arate}}{\delta}
+ \frac{h^{\grate}}{\delta}\Big)
\end{equation}
for all $h\le\delta_0^{(1+{\brate})/{\arate}}$. If, in addition, \(\grate\ge\arate\), then
\begin{equation}
\label{eq:generic_indicator_rate}
\ePhis(h),\;
\ePhiw(h)
\le
C\,h^{\frac{{\arate}{\brate}}{1+{\brate}}}
\end{equation}
\end{theorem}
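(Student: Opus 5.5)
The plan is to split the disagreement event $\{\Phi_T\neq\Phi_{T,h}\}$ according to whether the exact path enters the terminal boundary layer. Fix $0<\delta\le\delta_0$ and let
\[
A_\delta:=\Big\{\inf_{t\in[T-\delta,T]}\operatorname{dist}(\boldsymbol X_t,\partial D)\le\delta\Big\}.
\]
Assumption~\ref{ass:anticoncentration} gives $\mathbb P(A_\delta)\le C_T\delta^{\brate}$, which is the first term. On $A_\delta^c$, the exact path stays at distance greater than $\delta$ from $\partial D$ throughout $[T-\delta,T]$. By continuity it therefore stays either in $D$ or in its exterior on that window. In particular $\tau_D\notin[T-\delta,T]$, so either $\tau_D<T-\delta$ or $\tau_D>T$.

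We treat the two misclassification cases on $A_\delta^c$ separately.

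\emph{Case (i): $\tau_D<T-\delta$ but $\nu_h=T$.} Then $\tau=\tau_D$ and $|\tau-\nu_h|>\delta$. Markov's inequality with Assumption~\ref{ass:exit_time_rate} bounds this probability by $C_\tau h^{\arate}/\delta$.

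\emph{Case (ii): $\tau_D>T$ but $\nu_h<T$.} Here $\tau=T$. If $\nu_h\le T-\delta$, then again $|\tau-\nu_h|\ge\delta$, giving the same Markov bound. If instead $\nu_h\in(T-\delta,T)$, then the mesh point $t_n=\nu_h$ satisfies $\bar{\boldsymbol X}^h_{t_n}\notin D$, while $\operatorname{dist}(\boldsymbol X_{t_n},\partial D)>\delta$ with $\boldsymbol X_{t_n}\in D$. Hence
\[
\sup_{t_k\in\mathcal T_h}|\boldsymbol X(t_k)-\bar{\boldsymbol X}(t_k)|>\delta,
\]
and Markov's inequality with Assumption~\ref{ass:state_rate} bounds this by $C_X h^{\grate}/\delta$.

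Summing the contributions gives \eqref{eq:generic_indicator_three_term}. The restriction $h\le\delta_0^{(1+\brate)/\arate}$ is only needed so that the optimizing $\delta$ below lies in $(0,\delta_0]$.

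The step needing most care is case (ii) with the numerical exit inside the terminal window. There the exit-time error alone is useless, and we must use that $\nu_h$ is a mesh point in order to invoke the mesh-point state bound \eqref{eq:mesh_point_rate}. This is why the theorem carries the $h^{\grate}/\delta$ term. We should also check the edge case $\tau_D=T$, which lies inside $A_\delta$ and so costs nothing. We also use that the clipping in $\nu_h$ treats $T\in\mathcal T_h$ consistently, so that $\Phi_{T,h}=1$ exactly when some mesh point $t_n<T$ lies outside $D$.

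For \eqref{eq:generic_indicator_rate}, when $\grate\ge\arate$ we have $h^{\grate}\le h^{\arate}$ for $h\le1$, so the bound becomes $C(\delta^{\brate}+h^{\arate}/\delta)$. Balancing the two terms with $\delta=h^{\arate/(1+\brate)}$ gives $C h^{\arate\brate/(1+\brate)}$. This choice satisfies $\delta\le\delta_0$ precisely under the stated constraint on $h$. The weak bound follows from $\ePhiw\le\ePhis$, noted after \eqref{eq:weak_indicator_error}.
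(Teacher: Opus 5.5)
Your proposal is correct and follows essentially the same route as the paper. The paper also covers $\{\Phi_T\neq\Phi_{T,h}\}$ by three events: the terminal-tube event $A_\delta$, the exit-time deviation event $\{|\tau-\nu_h|>\delta\}$, and the mesh-point state deviation event $B_{h,\delta}$. It bounds these by Assumption~\ref{ass:anticoncentration} and Markov's inequality, and balances with $\delta=h^{\alpha/(1+\beta)}$. Your restriction to $A_\delta^c$ first, which forces $\tau_D\notin[T-\delta,T]$, and your use of $\ge\delta$ in Markov's inequality handle the edge cases $\tau_D=T-\delta$ and $\nu_h=T-\delta$ slightly more carefully than the paper's strict-inequality phrasing.
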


\begin{proof}
Define the events
\begin{equation*}
\begin{aligned}
A_\delta &:= \Big\{\inf_{t\in[T-\delta,T]} \operatorname{dist}(\boldsymbol X_t,\partial D)\le\delta\Big\} \\
B_{h,\delta} &:= \Big\{\sup_{t_k\in\mathcal T_h,\,t_k\le T}\|\boldsymbol X_{t_k}-\bar{\boldsymbol X}^h_{t_k}\|>\delta\Big\}
\end{aligned}
\end{equation*}
We claim
\begin{equation}
\label{eq:proof_three_set}
\{\Phi_T\neq\Phi_{T,h}\}
\subseteq
A_\delta \cup \{|\tau-\nu_h|>\delta\} \cup B_{h,\delta}
\end{equation}
To see this: if \(\tau_D<T\) but \(\nu_h=T\), then either \(\tau_D\in[T-\delta,T)\), placing the exact path in \(A_\delta\), or \(\tau_D\le T-\delta\), so \(|\tau-\nu_h|=T-\tau_D>\delta\). If \(\tau_D\ge T\) but \(\nu_h<T\), then either \(\nu_h\le T-\delta\), so \(|\tau-\nu_h|>\delta\), or \(\nu_h\in[T-\delta,T)\). In the latter case, at the mesh time \(t_k=\nu_h\), the numerical path exits (\(\bar{\boldsymbol X}^h_{t_k}\notin D\)) while the exact path survives (\(\boldsymbol X_{t_k}\in D\)). If \(B_{h,\delta}^c\) holds, then \(\|\boldsymbol X_{t_k}-\bar{\boldsymbol X}^h_{t_k}\|\le\delta\), so \(\operatorname{dist}(\boldsymbol X_{t_k},\partial D)\le\delta\), and \(A_\delta\) occurs.

Taking probabilities in~\eqref{eq:proof_three_set} via the union bound and applying Assumption~\ref{ass:anticoncentration} and Markov's inequality to~\eqref{eq:Lq_exit_rate} and~\eqref{eq:mesh_point_rate} gives~\eqref{eq:generic_indicator_three_term}. When \(\grate\ge\arate\), the \(h^{\grate}/\delta\) term is dominated by \(h^{\arate}/\delta\), and optimizing \(\delta=h^{{\arate}/(1+{\brate})}\) gives~\eqref{eq:generic_indicator_rate}. The weak bound follows from $\ePhiw(h)\le\ePhis(h)$.
\end{proof}

\begin{remark}
When the terminal region is well separated from $\partial D$, the terminal tube estimate \ref{ass:anticoncentration} holds for all $\brate>0$ and the indicator rate recovers the full exit-time rate $O(h^{\arate})$ as $\brate$ approaches $\infty$. More generally, $\brate$ quantifies clearance from $\partial D$ near the horizon, with $\brate=1$ corresponding to a standard bounded density regime and $\brate>1$ to stronger decay.
\end{remark}

\subsection{Running-cost approximation}
\label{subsec:running_cost_approximation}
Fix a feedback law \(\boldsymbol\mu\) and define \(Y_t:=\int_0^t \ell^{\eta,\mu}(\boldsymbol X_s,s)\,ds\) with \(\ell^{\eta,\mu}(\boldsymbol x,t):=\ell^\eta(\boldsymbol x,\boldsymbol\mu(\boldsymbol x,t),t)\). By augmenting the state to \(\boldsymbol Z_t:=(\boldsymbol X_t,t,Y_t)\in\mathbb R^{d+2}\) and applying the strong It\^o--Taylor mesh-point theorem~\cite[Theorem~10.6.3]{kloeden1992numerical} to the enlarged system (whose adaptive mesh is determined by the \(\boldsymbol X\)-component alone), the running-cost accumulator satisfies
\begin{equation}
\label{eq:augmented_mesh_rate}
\mathbb E\Big[\sup_{t_k\in\mathcal T_h}|Y_{t_k}-\bar Y^h_{t_k}|\Big]
\le C h^{\grate}
\end{equation}
For \(\grate=3/2\), this requires the order-matched It\^o--Taylor update for \(Y\). Assuming a truncated adaptive mesh (\(T\in\mathcal T_h\)), the stopped running-cost error satisfies
\begin{equation}
\label{eq:running_cost_rate}
\mathbb E|Y_\tau - \bar Y^h(\nu_h)|
\le
C\,h^{\min({\grate},{\arate})}
\end{equation}
by splitting \(|Y_\tau - \bar Y^h(\nu_h)|\le |Y_\tau - Y_{\nu_h}| + |Y_{\nu_h} - \bar Y^h(\nu_h)|\) and applying~\eqref{eq:Lq_exit_rate} and~\eqref{eq:augmented_mesh_rate}, with $\bar Y^h$ clipped at $\pm\|\ell^\eta\|_\infty T$.

\subsection{Reduction theorem for the relaxed stopped cost}

We now combine the indicator bound (Theorem~\ref{thm:generic_indicator_rate}) and the running-cost bound~\eqref{eq:running_cost_rate} into a single theorem for the relaxed stopped cost. The augmented exit penalty \(\psi_1^\eta\) is state-independent (Assumption~\ref{ass:cost}), while the augmented survival penalty \(\psi_2^\eta\) is evaluated at the deterministic terminal time \(T\) using the bounded extension from Assumption~\ref{ass:cost}, and is Lipschitz on \(\overline{D_{\bar R_D}}\). Recall the exact and numerical relaxed stopped costs
\begin{align}
\label{eq:stopped_cost_exact_compact}
\mathcal{L}_\eta
&:=
\psi_1^\eta\,\mathbf 1\{\tau_D<T\}
+
\psi_2^\eta(\boldsymbol X_T)\,\mathbf 1\{\tau_D\ge T\}
\nonumber\\
&\quad+
Y_\tau
-\eta\Delta
\end{align}
The numerical counterpart $\Setah$\label{eq:stopped_cost_num_compact} is defined analogously with $(\tau_D,\boldsymbol X_T,Y_\tau)$ replaced by $(\nu_h,\bar{\boldsymbol X}^h_T,\bar Y^h(\nu_h))$.

\begin{theorem}[Stopped-functional reduction]
\label{thm:stopped_functional}
Under Assumptions~\ref{ass:exit_time_rate}--\ref{ass:anticoncentration}, suppose~\eqref{eq:augmented_mesh_rate} holds and the adaptive mesh is truncated so that \(T\in\mathcal T_h\). Then there exists \(C>0\) such that for all \(h\) sufficiently small
\begin{align}
\label{eq:stopped_cost_rate_compact}
\mathbb E|\Seta - \Setah|
&\le
C_{\mathrm{sw}}\,
\mathbb E\big|\mathbf 1\{\tau_D<T\}-\mathbf 1\{\nu_h<T\}\big|
\nonumber\\
&\quad+
C\,h^{\min({\grate},{\arate})}
\end{align}
where
\[
C_{\mathrm{sw}}:=|\psi_1^\eta|+\|\psi_2^\eta\|_{L^\infty(\mathbb R^d)}
\]
The indicator term is controlled by Theorem~\ref{thm:generic_indicator_rate}.
\end{theorem}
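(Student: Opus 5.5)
Writing $\mathcal L_\eta - \mathcal L_{\eta,h}$ as a sum of three differences, I will bound each piece separately and then add them. With $\Phi_T=\mathbf 1\{\tau_D<T\}$ and $\Phi_{T,h}=\mathbf 1\{\nu_h<T\}$, the constant $\eta\Delta$ cancels, and
\begin{align*}
\mathcal L_\eta-\mathcal L_{\eta,h}
&=\psi_1^\eta(\Phi_T-\Phi_{T,h})
+\big[\psi_2^\eta(\boldsymbol X_T)(1-\Phi_T)-\psi_2^\eta(\bar{\boldsymbol X}^h_T)(1-\Phi_{T,h})\big]\\
&\quad+\big(Y_\tau-\bar Y^h(\nu_h)\big).
\end{align*}
The first term is bounded pathwise by $|\psi_1^\eta|\,|\Phi_T-\Phi_{T,h}|$. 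The third term is bounded in expectation by $C h^{\min(\gamma,\alpha)}$ using the running-cost estimate \eqref{eq:running_cost_rate}. That estimate already uses the truncated mesh $T\in\mathcal T_h$, Assumption~\ref{ass:exit_time_rate}, \eqref{eq:augmented_mesh_rate}, and the clipping of $\bar Y^h$.

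For the survival term, I add and subtract $\psi_2^\eta(\bar{\boldsymbol X}^h_T)(1-\Phi_T)$:
\[
\psi_2^\eta(\boldsymbol X_T)(1-\Phi_T)-\psi_2^\eta(\bar{\boldsymbol X}^h_T)(1-\Phi_{T,h})
=(1-\Phi_T)\big[\psi_2^\eta(\boldsymbol X_T)-\psi_2^\eta(\bar{\boldsymbol X}^h_T)\big]
+\psi_2^\eta(\bar{\boldsymbol X}^h_T)(\Phi_{T,h}-\Phi_T).
\]
The second piece is bounded by $\|\psi_2^\eta\|_{L^\infty(\mathbb R^d)}|\Phi_T-\Phi_{T,h}|$. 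The bounded extension in Assumption~\ref{ass:cost} makes this legitimate even if $\bar{\boldsymbol X}^h_T$ leaves $D$. Adding this to the $\psi_1^\eta$ term produces the switching constant $C_{\mathrm{sw}}$ in front of $\ePhis(h)$.

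The first piece is the delicate one, and I expect it to be the main obstacle. On $\{\tau_D\ge T\}$ the exact state satisfies $\boldsymbol X_T\in\overline D$. The numerical state $\bar{\boldsymbol X}^h_T$, however, need not lie in $\overline{D_{\bar R_D}}$, where the Lipschitz bound holds. I will split this piece on the event $B:=\{|\boldsymbol X_T-\bar{\boldsymbol X}^h_T|\le\bar R_D\}$.
\begin{itemize}
\item On $B$, both points lie in $\overline{D_{\bar R_D}}$, so the term is at most $L_{\psi_2}|\boldsymbol X_T-\bar{\boldsymbol X}^h_T|$. Since $T\in\mathcal T_h$, Assumption~\ref{ass:state_rate} gives expectation at most $L_{\psi_2}C_X h^{\gamma}$.
\item On $B^c$, the term is at most $2\|\psi_2^\eta\|_\infty$. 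By Markov's inequality and \eqref{eq:mesh_point_rate}, $\mathbb P(B^c)\le C_X h^\gamma/\bar R_D$.
\end{itemize}
Both contributions are $O(h^\gamma)\le O(h^{\min(\gamma,\alpha)})$ for $h\le 1$.

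Collecting the three bounds and taking expectations yields \eqref{eq:stopped_cost_rate_compact}, where $C$ depends on $L_{\psi_2}$, $\|\psi_2^\eta\|_\infty$, $\bar R_D$, $C_X$, $C_\tau$, and $\|\ell^\eta\|_\infty$. The indicator term can then be estimated by Theorem~\ref{thm:generic_indicator_rate}. The requirement that $h$ be sufficiently small enters only through the range of validity of Assumptions~\ref{ass:exit_time_rate}--\ref{ass:state_rate} and $h\le1$.
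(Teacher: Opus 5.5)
Your proposal is correct and is essentially the paper's argument. You isolate the switching terms with constant $C_{\mathrm{sw}}$, bound the survival difference by the Lipschitz estimate on $\{\|\boldsymbol X_T-\bar{\boldsymbol X}^h_T\|\le\bar R_D\}$ plus $2\|\psi_2^\eta\|_{L^\infty(\mathbb R^d)}$ times a Markov bound on the complement, and invoke \eqref{eq:running_cost_rate}. The only cosmetic difference is that you add and subtract $\psi_2^\eta(\bar{\boldsymbol X}^h_T)(1-\Phi_T)$, so your Lipschitz piece lives on $\{\tau_D\ge T\}$ rather than on the paper's $\{\tau_D\ge T,\,\nu_h\ge T\}$; this gives the same constants and, like the paper's Markov step, uses that $\bar{\boldsymbol X}^h_T$ is the numerical state at the mesh point $T$ on every path, as Assumption~\ref{ass:state_rate} presupposes.
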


\begin{proof}
Decompose the survival term without factoring \(\psi_2^\eta(\boldsymbol X_T)\) across all paths:
\begin{align}
\label{eq:proof_phi2_decomp}
&\psi_2^\eta(\boldsymbol X_T)\mathbf 1\{\tau_D\ge T\}
-
\psi_2^\eta(\bar{\boldsymbol X}^h_T)\mathbf 1\{\nu_h\ge T\}
\nonumber\\
&\quad=
\big(\psi_2^\eta(\boldsymbol X_T)-\psi_2^\eta(\bar{\boldsymbol X}^h_T)\big)
\mathbf 1\{\tau_D\ge T,\,\nu_h\ge T\}
\nonumber\\
&\qquad+
\psi_2^\eta(\boldsymbol X_T)\mathbf 1\{\tau_D\ge T,\,\nu_h<T\}
\nonumber\\
&\qquad-
\psi_2^\eta(\bar{\boldsymbol X}^h_T)\mathbf 1\{\tau_D<T,\,\nu_h\ge T\}
\end{align}
Taking absolute values and splitting according to whether
\(\|\boldsymbol X_T-\bar{\boldsymbol X}^h_T\|\le \bar R_D\),
\begin{align}
\label{eq:proof_reduction}
|\Seta - \Setah|
&\le
L_{\psi_2^\eta}\|\boldsymbol X_T - \bar{\boldsymbol X}^h_T\|\,
\mathbf 1\{E_{\mathrm{Lip}}\}
\nonumber\\
&\quad+
2\|\psi_2^\eta\|_{L^\infty(\mathbb R^d)}\,
\mathbf 1\{E_{\mathrm{Lip}}^c\cap\{\tau_D\ge T,\,\nu_h\ge T\}\}
\nonumber\\
&\quad+
C_{\mathrm{sw}}\,
\big|\mathbf 1\{\tau_D<T\}-\mathbf 1\{\nu_h<T\}\big|
\nonumber\\
&\quad+
|Y_\tau - \bar Y^h(\nu_h)|
\end{align}
where \(E_{\mathrm{Lip}}:=\{\tau_D\ge T,\,\nu_h\ge T,\,\|\boldsymbol X_T-\bar{\boldsymbol X}^h_T\|\le \bar R_D\}\). On \(E_{\mathrm{Lip}}\), \(\boldsymbol X_T\in\overline D\) and \(\bar{\boldsymbol X}^h_T\in\overline{D_{\bar R_D}}\), so the Lipschitz bound for \(\psi_2^\eta\) applies. Since \(T\in\mathcal T_h\), the mesh-point strong estimate gives
\[
\mathbb E\|\boldsymbol X_T-\bar{\boldsymbol X}^h_T\|\le C h^{\grate}
\]
and by Markov's inequality,
\[
\mathbb P\!\big(\|\boldsymbol X_T-\bar{\boldsymbol X}^h_T\|>\bar R_D\big)
\le
\bar R_D^{-1}\,C h^{\grate}
\]
Therefore the survival-term contribution is \(O(h^{\grate})\). Applying this together with~\eqref{eq:running_cost_rate} and Theorem~\ref{thm:generic_indicator_rate} gives~\eqref{eq:stopped_cost_rate_compact}.
\end{proof}

\begin{corollary}
\label{cor:leading_rate}
Under the assumptions of Theorem~\ref{thm:stopped_functional},
\begin{equation}
\label{eq:leading_rate}
\mathbb E|\Seta - \Setah|
\le
C\,h^{\krate}
\end{equation}
with \(\krate:=\min\!\big\{\frac{{\arate}{\brate}}{1+{\brate}},\,\min({\grate},{\arate})\big\}\). When the state and running-cost rates are order-matched so that \(\min({\grate},{\arate})={\arate}\) as in Section~\ref{sec:higher_order_methods}, the indicator term is the dominant term.
\end{corollary}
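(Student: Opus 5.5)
The plan is to combine Theorem~\ref{thm:stopped_functional} with Theorem~\ref{thm:generic_indicator_rate}, taking care over the fact that the latter's rate \eqref{eq:generic_indicator_rate} was stated only under $\grate\ge\arate$. Theorem~\ref{thm:stopped_functional} already gives
\[
\mathbb E|\Seta-\Setah|\le C_{\mathrm{sw}}\,\ePhis(h)+C\,h^{\min(\grate,\arate)}
\]
for all sufficiently small $h$. The second term is at most $C h^{\krate}$ for $h\le 1$, because $\krate\le\min(\grate,\arate)$. So everything reduces to showing $\ePhis(h)\le C h^{\krate}$.

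To bound $\ePhis(h)$, I will return to the three-term estimate \eqref{eq:generic_indicator_three_term} rather than \eqref{eq:generic_indicator_rate}, so that no ordering between $\grate$ and $\arate$ is assumed. Write $m:=\min(\grate,\arate)$. For $h\le 1$,
\[
\frac{h^{\arate}}{\delta}+\frac{h^{\grate}}{\delta}\le \frac{2h^{m}}{\delta}.
\]
I then choose $\delta=h^{m/(1+\brate)}$, which gives $\ePhis(h)\le C h^{m\brate/(1+\brate)}$.

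The main obstacle is matching this exponent to the stated $\krate$. We have $m\brate/(1+\brate)\le\arate\brate/(1+\brate)$, with equality exactly when $\grate\ge\arate$. In the order-matched case, $m=\arate$ and $\krate=\arate\brate/(1+\brate)$, since $\brate/(1+\brate)<1$; the bound then agrees with \eqref{eq:generic_indicator_rate} and is dominant, as the corollary's final sentence asserts. When $\grate<\arate$, the route above only yields the exponent $\grate\brate/(1+\brate)$, which is below $\krate$. In that case I would read the corollary as presupposing the hypothesis $\grate\ge\arate$ of Theorem~\ref{thm:generic_indicator_rate}, under which $\krate=\arate\brate/(1+\brate)$. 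Otherwise the statement should be corrected to $\krate=m\brate/(1+\brate)$.

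Two routine checks remain. First, $\delta\le\delta_0$ for small $h$, so the admissibility condition of Theorem~\ref{thm:generic_indicator_rate} is met for $h$ below an explicit threshold. Second, the constants $C_{\mathrm{sw}}$, $C_T$, $C_\tau$ and $C_X$ are all independent of $h$. Combining the two bounds then gives \eqref{eq:leading_rate}.
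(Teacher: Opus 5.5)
Your argument is correct and is essentially the paper's route: the corollary comes from inserting the indicator bound of Theorem~\ref{thm:generic_indicator_rate} into Theorem~\ref{thm:stopped_functional} and absorbing the $h^{\min(\gamma,\alpha)}$ term, since $\kappa\le\min(\gamma,\alpha)$. Your caveat is also well founded: that route silently uses the hypothesis $\gamma\ge\alpha$, which the paper states explicitly only later in Corollary~\ref{cor:closed_loop_pf} and which does hold for the schemes of Section~\ref{sec:higher_order_methods}; without it, the three-term bound yields only the exponent $\min(\gamma,\alpha)\beta/(1+\beta)$.
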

In practice, a controller using the numerical relaxed stopped cost $\Setah$ in place of $\Seta$ incurs an error that vanishes at rate $O(h^{\krate})$. Improving the exit-time propagator (larger $\arate$), therefore, directly tightens the cost estimate and, through dual ascent, the constraint satisfaction of the resulting controller.
\section{HIGHER-ORDER ADAPTIVE METHODS}
\label{sec:higher_order_methods}
This section bridges the exit-time approximation rates \(\alpha\) of \cite{hoel2024adaptive} to the Theorems of Section~\ref{sec:discretization_error}. Two cases are studied, distinguished by the commutativity structure of the diffusion. Writing the diffusion coefficient in~\eqref{eq:controlled_sde} columnwise as $\mathbf B=[\mathbf b_1,\ldots,\mathbf b_m]$, define $L_j := \sum_{k=1}^d b_j^{(k)}(\mathbf x)\,\partial_{x_k}$. The commutativity conditions are
\begin{align}
\label{eq:first_commutativity}
L_i\mathbf b_j &= L_j\mathbf b_i
\quad \forall\, i,j \\
\label{eq:second_commutativity}
L_iL_j\mathbf b_k &= L_jL_i\mathbf b_k
\quad \forall\, i,j,k
\end{align}

If~\eqref{eq:first_commutativity}--\eqref{eq:second_commutativity} are not imposed, the natural higher-order choice is the adaptive order-\(1\) Milstein method with explicit approximation of the off-diagonal It\^o integrals. If they do hold, then the adaptive order-\(1.5\) method of~\cite{hoel2024adaptive} is available and yields a stronger exit-time rate.

\subsection{Noncommutative adaptive order-\(1\) Milstein with L\'evy areas}
\label{subsec:noncomm_order1}

For general diffusions, the commutativity conditions~\eqref{eq:first_commutativity}--\eqref{eq:second_commutativity} need not hold. In this case, we use the adaptive order-$1$ method of~\cite{hoel2024adaptive}, where the step-size is adaptively chosen from $\{h, h^2\}$ based on the distance from boundary, with explicit approximation of the off-diagonal iterated It\^o integrals via Wiktorsson's method~\cite{wiktorsson2001joint}. The Milstein discretization is used:
\begin{align}
\label{eq:noncomm_milstein_step}
\bar{\boldsymbol X}_{n+1}
&=
\bar{\boldsymbol X}_n
+
\mathbf F^{\mu}(\bar{\boldsymbol X}_n)\,\Delta t_n
+
\sum_{j=1}^m \mathbf b_j(\bar{\boldsymbol X}_n,t_n)\,\Delta W_n^j
\nonumber\\
&\quad
+
\sum_{i,j=1}^m
L_i\mathbf b_j(\bar{\boldsymbol X}_n,t_n)\,\widehat I_n^{ij}
\end{align}
where $\Delta W_n^j := W_{t_{n+1}}^j-W_{t_n}^j$. The iterated integrals are
\begin{equation}
\label{eq:offdiag_milstein_terms}
\widehat I_n^{ii}
=
\tfrac{1}{2}\big((\Delta W_n^i)^2-\Delta t_n\big),\;
\widehat I_n^{ij}
=
\tfrac{1}{2}\Delta W_n^i\Delta W_n^j + \widehat A_{n,p_n}^{ij}
\end{equation}
where $i\neq j$, $\widehat A_{n,p_n}^{ij}$ is the Wiktorsson L\'evy-area approximation~\cite{wiktorsson2001joint} with truncation level
\begin{equation}
\label{eq:wiktorsson_scaling}
p_n \asymp (\Delta t_n)^{-1/2}
\end{equation}
chosen so that the approximation error matches the local Milstein order~\cite{kastner2022analysis}.

\begin{theorem}
\label{thm:noncomm_order1_exit}
Let \(\nu_h^{\mathrm{NC}}\) be the numerical exit time produced by the adaptive Milstein scheme~\eqref{eq:noncomm_milstein_step} with~\eqref{eq:offdiag_milstein_terms} and~\eqref{eq:wiktorsson_scaling}. Then $\mathbb E|\tau-\nu_h^{\mathrm{NC}}| \le C_\xi h^{1-\xi}$\label{eq:noncomm_exit_rate} for every $\xi>0$ and all sufficiently small $h$.
\end{theorem}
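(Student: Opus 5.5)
The plan is to reduce Theorem~\ref{thm:noncomm_order1_exit} to the proof architecture of~\cite[Theorem~2.7]{hoel2024adaptive}. That argument uses the commutativity conditions~\eqref{eq:first_commutativity}--\eqref{eq:second_commutativity} only to obtain a local and mesh-point strong order-$1$ bound for the Milstein step without simulating L\'evy areas. Everything else depends only on three ingredients: a mesh-point strong error bound of order $1$ for the discretization on the adaptive mesh (step $h$ away from the boundary, step $h^2$ in a layer of width $O(\sqrt{h\log h^{-1}})$ near $\partial D$); the uniform ellipticity and $C^4$ boundary regularity of Section~\ref{sec:regularity}; and moment bounds for the Brownian increments. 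The remaining steps are the exit-time estimate from the state error via the boundary-crossing/overshoot analysis and the estimate for the probability of exiting between mesh points. My first step is therefore to isolate, from~\cite{hoel2024adaptive}, the precise statement of the mesh-point and local one-step error bounds that their exit-time proof consumes. I then replace the commutative Milstein step by~\eqref{eq:noncomm_milstein_step}.

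The second step, which I expect to be the main obstacle, is to establish these bounds for the scheme with Wiktorsson areas on a \emph{random, adaptive} mesh. On a fixed mesh, the Wiktorsson approximation with truncation $p_n\asymp(\Delta t_n)^{-1/2}$ has conditional mean zero given $\mathcal F_{t_n}$ and the increments. Its $L^2$ error satisfies $\mathbb E|\widehat A^{ij}_{n,p_n}-A^{ij}_n|^2\le C\,\Delta t_n^2/p_n^2\asymp \Delta t_n^3$~\cite{wiktorsson2001joint,kastner2022analysis}. This is exactly the size of the local mean-square error of Milstein, so the standard Kloeden--Platen argument~\cite[Theorem~10.6.3]{kloeden1992numerical} yields global mean-square order $1$. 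To handle the adaptive mesh, I will use that each step size $\Delta t_n\in\{h,h^2\}$ is $\mathcal F_{t_n}$-measurable, being chosen from $\bar{\boldsymbol X}_n$. The martingale structure of the area errors is therefore preserved, and the usual discrete Gronwall argument with Burkholder--Davis--Gundy bounds goes through with sums $\sum_n \Delta t_n^3/\Delta t_n\cdot\Delta t_n\le C h^2$ times the total time. This gives $\mathbb E\sup_{t_k}|\boldsymbol X_{t_k}-\bar{\boldsymbol X}_{t_k}|^2\le Ch^2$, and hence Assumption~\ref{ass:state_rate} with $\grate=1$. One must also check that the expected number of steps remains $O(h^{-1}\log h^{-1})$, since the $h^2$ steps occur only in the thin boundary layer. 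This keeps the cost of the $p_n\asymp h^{-1}$ area terms controlled, but it is not needed for the rate.

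Third, I will verify the remaining ingredient, the between-mesh-point exit probability, which does not depend on the scheme at all. It concerns the exact process between mesh points and uses only ellipticity and boundary regularity, so~\cite{hoel2024adaptive} applies unchanged. I will also need higher-moment versions of the local error bounds, because the exit-time argument uses $L^q$ state bounds with large $q$ and the losses from H\"older's inequality give the $h^{-\xi}$ factor. The Wiktorsson area error is a finite Gaussian-mixture tail, so it has all moments with the same scaling. The $L^q$ analogue of the previous step therefore holds by the same BDG argument. Substituting these bounds into the argument of~\cite[Theorem~2.7]{hoel2024adaptive} gives $\mathbb E|\tau-\nu_h^{\mathrm{NC}}|\le C_\xi h^{1-\xi}$ for every $\xi>0$.
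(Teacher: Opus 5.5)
Your plan is the paper's proof. Both follow Theorem~2.7 of \cite{hoel2024adaptive}, swap in Wiktorsson areas with $p_n\asymp(\Delta t_n)^{-1/2}$ so that the mean-square area error $\Delta t_n^2/p_n^2\asymp\Delta t_n^3$ matches the local Milstein order (the paper simply cites \cite{kastner2022analysis} for this), and reuse the rest verbatim; your remarks on the conditional mean-zero area error and $\mathcal F_{t_n}$-measurable step sizes fill in what the paper leaves implicit. The one place you go beyond the paper, the $L^q$ version of the area bound, is not supported by the reason you give, because the $\Delta t_n/p_n$ rate comes from a mean-square (Wasserstein-$2$) coupling of the Fourier tail with its Gaussian replacement, and finiteness of all moments only interpolates between this and the trivial $\Delta t_n/\sqrt{p_n}$ tail size, which loses a power of $p_n$; so if the exit-time argument really consumes high-moment state bounds, you would need a separate $L^q$ normal-approximation estimate for the tail.
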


\begin{proof}
The proof follows~\cite[Theorem~2.7]{hoel2024adaptive} with exact off-diagonal It\^o integrals replaced by Wiktorsson approximations. By the truncation analysis of~\cite{kastner2022analysis}, this preserves the local $L^2$ order, and the remainder of the proof applies verbatim.
\end{proof}

\begin{corollary}
\label{cor:noncomm_indicator_rate}
Theorem~\ref{thm:generic_indicator_rate} with $\arate=1-\xi$ gives
$\ePhis(h) \le C\,h^{(1-\xi){\brate}/(1+{\brate})}$.
\end{corollary}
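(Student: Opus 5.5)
The plan is to instantiate Theorem~\ref{thm:generic_indicator_rate} with the rates supplied by the noncommutative scheme. Its hypotheses are Assumptions~\ref{ass:exit_time_rate}, \ref{ass:state_rate}, and~\ref{ass:anticoncentration}, together with \(\grate\ge\arate\) for the optimized rate. So I would first verify each of these for the adaptive Milstein scheme~\eqref{eq:noncomm_milstein_step} with Wiktorsson L\'evy areas. Once they hold, I would read off the exponent from~\eqref{eq:generic_indicator_rate}.

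\textbf{Exit-time rate.} Theorem~\ref{thm:noncomm_order1_exit} gives \(\mathbb E|\tau-\nu_h^{\mathrm{NC}}|\le C_\xi h^{1-\xi}\). This is exactly Assumption~\ref{ass:exit_time_rate} with \(\arate=1-\xi\) and \(C_\tau=C_\xi\), valid for all sufficiently small \(h\).

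\textbf{Mesh-point state rate.} Here I need Assumption~\ref{ass:state_rate} with \(\grate=1\). For the exact-iterated-integral Milstein scheme this is \cite[Proposition~3.1]{hoel2024adaptive}. With Wiktorsson areas at truncation \(p_n\asymp(\Delta t_n)^{-1/2}\), the area error on each step has conditional mean zero and \(L^2\) size \(O(\Delta t_n^{3/2})\)~\cite{kastner2022analysis}. This is the same order as the Milstein local remainder. The standard martingale/Gronwall argument, including Burkholder--Davis--Gundy for the supremum, therefore still yields \(\mathbb E\sup_{t_k}|\boldsymbol X_{t_k}-\bar{\boldsymbol X}_{t_k}|\le C h\) on the adaptive mesh. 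This is the step I expect to be the main obstacle. The adaptive mesh is a stopping-time-driven random mesh with steps in \(\{h,h^2\}\), so I would check that the local errors still sum as martingale increments adapted to the mesh filtration. I would also need the number of steps to remain \(O(h^{-1})\) in expectation, up to the \(h^{-\xi}\) losses already absorbed in Theorem~\ref{thm:noncomm_order1_exit}. The plan is to reuse the proof of~\cite[Proposition~3.1]{hoel2024adaptive} verbatim, replacing exact Lévy areas by the approximate ones and adding the orthogonal error term.

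\textbf{Tube estimate and conclusion.} Assumption~\ref{ass:anticoncentration} concerns only the exact diffusion, so it is unaffected by the choice of scheme and is taken as given with exponent \(\brate\). Then \(\grate=1\ge 1-\xi=\arate\), so the second part of Theorem~\ref{thm:generic_indicator_rate} applies. Choosing \(\delta=h^{(1-\xi)/(1+\brate)}\), which is at most \(\delta_0\) for \(h\) small, the bound~\eqref{eq:generic_indicator_three_term} becomes
\[
\ePhis(h)\le C\bigl(h^{(1-\xi)\brate/(1+\brate)}+2h^{(1-\xi)-(1-\xi)/(1+\brate)}\bigr)=3C\,h^{(1-\xi)\brate/(1+\brate)},
\]
which is the claimed rate. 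The same bound holds for \(\ePhiw\), since \(\ePhiw\le\ePhis\).
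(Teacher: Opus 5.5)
Your proposal is correct and follows the paper's argument: the corollary is just Theorem~\ref{thm:generic_indicator_rate} with $\arate=1-\xi$ from Theorem~\ref{thm:noncomm_order1_exit}, $\grate=1\ge\arate$, and $\delta=h^{\arate/(1+\brate)}$. Your re-derivation of $\grate=1$ for the Wiktorsson-modified scheme goes beyond the paper, which simply takes $\grate=1$ from Assumption~\ref{ass:state_rate}; the step-count concern you flag does not arise, since the martingale local errors contribute at most $\bigl(\sum_n \Delta t_n^3\bigr)^{1/2}\le \sqrt{T}\,h$ however many steps the adaptive mesh takes.
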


For any $\brate>0$, this roughly doubles the indicator exponent compared to EM ($\arate\brate/(1+\brate)$ vs.\ $\arate\brate/(2(1+\brate))$). As $\brate\to\infty$, the rate approaches the full exit-time rate $1-\xi$.

\subsection{Commutative adaptive order-\(1.5\) method}
\label{subsec:comm_order15}

Under~\eqref{eq:first_commutativity}--\eqref{eq:second_commutativity}, the adaptive order-\(1.5\) method of~\cite{hoel2024adaptive} uses the three-scale mesh \(\{h,h^2,h^3\}\) based on the boundary distance. Let \(\nu_h^{\mathrm{C}}\) denote the resulting numerical exit time.

For the running-cost component, the order-matched It\^o--Taylor update for $Y$ at order $1.5$ uses the same higher-order random increments as the state update, as mentioned in Section ~\ref{subsec:running_cost_approximation}, since a Riemann sum retains only order~$1$.

\begin{theorem}[{\cite[Theorem 2.10]{hoel2024adaptive}}]
\label{thm:comm_order15_exit}
Under the assumptions of the adaptive commutative order-\(1.5\) method in~\cite{hoel2024adaptive}, $\mathbb E|\tau-\nu_h^{\mathrm{C}}| \le C_\xi h^{3/2-\xi}$\label{eq:comm_exit_rate} for every $\xi>0$ and all sufficiently small $h$.
\end{theorem}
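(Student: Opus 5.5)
This theorem is cited from \cite[Theorem~2.10]{hoel2024adaptive}, so the plan is to check that the present setting satisfies its hypotheses and then reproduce the structure of that argument. The reduction uses the standing regularity conditions of Section~\ref{sec:regularity}. These are exactly \cite[Assumptions~A and~B]{hoel2024adaptive} for the nominal closed-loop drift $\mathbf F^{\mu}$, which is time-homogeneous. Together with the commutativity conditions \eqref{eq:first_commutativity}--\eqref{eq:second_commutativity}, they put us in the setting of the commutative order-$1.5$ scheme. Under commutativity, the order-$1.5$ It\^o--Taylor step needs only $\Delta W_n$ and the Gaussian increments $\int (W_s-W_{t_n})\,ds$. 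The multiple integrals $I^{(i,j)}$ reduce to products of increments, and the triple integrals $I^{(i,j,k)}$ reduce to symmetric polynomials in $\Delta W$. Hence no L\'evy-area simulation is needed, and the scheme is implementable with exact local increments.

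The argument has three steps.
\begin{enumerate}
\item \textbf{Strong state estimate.} Establish the mesh-point estimate of Assumption~\ref{ass:state_rate} with $\gamma=3/2$ on the three-scale mesh $\{h,h^2,h^3\}$. This comes from \cite[Proposition~3.1]{hoel2024adaptive}. The local $L^2$ error is $O((\Delta t_n)^{2})$ and the mean error is $O((\Delta t_n)^{5/2})$, summed over a mesh whose number of fine steps is controlled by occupation-time estimates near $\partial D$.
\item \textbf{Decomposition of $\mathbb E|\tau-\nu_h^{\mathrm C}|$.} Split by the position of the exact path relative to the boundary layers $D\setminus D_{-h^{1/2-\xi'}}$ and the finer layers.
\begin{itemize}
\item Far from $\partial D$, with step $h$, the state error $O(h^{3/2})$ together with a Gaussian-tail argument shows that a spurious exit or a missed exit requires a large deviation. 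Its probability is $O(h^{p})$ for every $p$.
\item In the intermediate and inner layers, with steps $h^2$ and $h^3$, a missed continuous exit between mesh points costs only $O(h^{3})$ in time per occurrence.
\item The main contribution comes from events where $\bar{\boldsymbol X}$ and $\boldsymbol X$ disagree about exiting because of the state error. Uniform ellipticity lets us compare $\operatorname{dist}(\boldsymbol X_t,\partial D)$ with a one-dimensional nondegenerate diffusion near the boundary, via the $C^3$ diffeomorphism $\pi_r$. An exact path that nearly exits but re-enters then has conditional remaining exit time of order $h^{3/2-\xi}$ in expectation. This follows from the standard estimate $\mathbb E[\tau\mid \operatorname{dist}\le\epsilon]\lesssim\epsilon$ for an elliptic diffusion started at distance $\epsilon$. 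The $\xi$-loss comes from taking a supremum over the mesh via a high moment and Markov's inequality.
\end{itemize}
\item \textbf{Expected number of steps.} Show that the expected number of steps stays $O(h^{-1}\log)$-type, using the probability that the path sits in each layer. This justifies calling $h$ the effective resolution.
\end{enumerate}

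The main obstacle is the third bullet of step~2: converting a pathwise state error $\varepsilon\sim h^{3/2-\xi}$ into an exit-time error of the same order in $L^1$, rather than $\sqrt{\varepsilon}$. I would first use the strong Markov property at the numerical exit time $\nu_h$. There $\operatorname{dist}(\boldsymbol X_{\nu_h},\partial D)\le\varepsilon$ with high probability, and the Lipschitz-in-distance bound on the expected exit time from $D$ for uniformly elliptic diffusions with $C^2$ boundary, via a barrier function, gives $\mathbb E[\tau_D-\nu_h\mid\mathcal F_{\nu_h}]\le C\varepsilon$. The symmetric case, where the exact path exits first, is handled similarly. Time-continuous overshoot within the fine step is bounded by $h^3$.
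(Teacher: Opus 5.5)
Your top-level move matches the paper's. The paper gives no argument of its own: it invokes \cite[Theorem~2.10]{hoel2024adaptive}, with Section~\ref{sec:regularity} supplying Assumptions~A--B for the time-homogeneous drift $\mathbf F^{\mu}$ and \eqref{eq:first_commutativity}--\eqref{eq:second_commutativity} supplying the commutative-noise hypothesis, and your hypothesis check does exactly this. The trouble is the reconstruction you then offer as the proof plan, which gets the key mechanism wrong. A missed exit between mesh points does not cost $O(h^3)$. Suppose $\boldsymbol X$ crosses $\partial D$ at $\tau\in(t_k,t_{k+1})$ and is back in $D$ at $t_{k+1}$. Then $\tau$ is already fixed while no exit has been registered. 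The delay $\nu_h-\tau$ is then the time needed to exit again from a point at distance of order $\sup_{s\in[\tau,t_{k+1}]}|\boldsymbol X_s-\boldsymbol X_\tau|$ from $\partial D$. By the barrier estimate this is of order $(\Delta t_k)^{1/2}$ up to logarithms, i.e.\ $h^{3/2}$ when $\Delta t_k=h^3$ (the classical $\sqrt{\Delta t}$ discrete-monitoring effect). That is why the innermost scale is $h^3$ here, and $h^2$ for the order-$1$ method: $(\Delta t)^{1/2}$ is matched to the strong state order. If a missed exit really cost only $O(\Delta t)$, an inner step of $h^{3/2}$ would already suffice and the finest scale would be unnecessary.

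The second gap is the sentence ``the symmetric case, where the exact path exits first, is handled similarly.'' The two directions are not symmetric.

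When $\nu_h<\tau$, you may use the strong Markov property of the exact diffusion at $\nu_h$. On that event $\operatorname{dist}(\boldsymbol X_{\nu_h},\partial D)\le|\boldsymbol X_{\nu_h}-\bar{\boldsymbol X}^h_{\nu_h}|$ holds deterministically. Hence $\mathbb E(\tau-\nu_h)^+\le C\,\mathbb E\sup_k|\boldsymbol X_{t_k}-\bar{\boldsymbol X}^h_{t_k}|=O(h^{3/2})$, with no high-probability argument needed.

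When $\tau<\nu_h$, you must bound the time until the discretely monitored \emph{numerical} path registers an exit. A barrier estimate for the exact diffusion says nothing about $\bar{\boldsymbol X}^h$ directly, so you need an extra argument. One route:
\begin{enumerate}
\item Work on $\{\sup_k|\boldsymbol X_{t_k}-\bar{\boldsymbol X}^h_{t_k}|\le r\}$ with $r=h^{3/2-\xi}$. There, bound $\nu_h$ by the first mesh time at which $\operatorname{dist}(\boldsymbol X_{t_k},D)>r$.
\item Control that time through exit times of $\boldsymbol X$, started at $\boldsymbol X_\tau\in\partial D$, from enlarged domains $D_{r'}$ with $r'$ a fixed multiple of $r$. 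This needs barrier bounds uniform in $r'\in[0,\bar R_D]$, which is what the $\pi_r$ condition and the extension of the coefficients to $D_{\bar R_D}$ in Section~\ref{sec:regularity} provide.
\item Treat re-entries between mesh points with the same $(\Delta t)^{1/2}$ accounting as above.
\item Bound the complementary event by a high moment of the mesh-point error; this is where a $\xi$-loss naturally enters.
\end{enumerate}
As written, your plan omits this half of the estimate. The cited theorem itself does apply in the paper's setting.
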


\begin{corollary}
\label{cor:comm_indicator_rate}
Theorem~\ref{thm:generic_indicator_rate} with $\arate=3/2-\xi$ gives
$\ePhis(h) \le C\,h^{(3/2-\xi){\brate}/(1+{\brate})}$.
\end{corollary}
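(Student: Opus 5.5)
The corollary is a direct instantiation of Theorem~\ref{thm:generic_indicator_rate}. The plan is to verify its three hypotheses for the commutative adaptive order-$1.5$ scheme, check the side condition $\grate\ge\arate$, and then read off the exponent.

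\textbf{Assumption~\ref{ass:exit_time_rate}.} By Theorem~\ref{thm:comm_order15_exit}, for every $\xi>0$ we have $\mathbb E|\tau-\nu_h^{\mathrm C}|\le C_\xi h^{3/2-\xi}$ for all sufficiently small $h$. So Assumption~\ref{ass:exit_time_rate} holds with $\arate=3/2-\xi$ and $C_\tau=C_\xi$.

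\textbf{Assumption~\ref{ass:state_rate}.} This holds with $\grate=3/2$ by \cite[Proposition~3.1]{hoel2024adaptive}, as recorded in the statement of that assumption, since the regularity conditions of Section~\ref{sec:regularity} together with \eqref{eq:first_commutativity}--\eqref{eq:second_commutativity} are exactly the hypotheses of the order-$1.5$ method. The mesh is truncated so that $T\in\mathcal T_h$.

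\textbf{Assumption~\ref{ass:anticoncentration}.} This is a property of the exact diffusion only, so it is simply assumed with exponent $\brate$.

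\textbf{Side condition and conclusion.} Since $\grate=3/2\ge 3/2-\xi=\arate$, the second part of Theorem~\ref{thm:generic_indicator_rate} applies. In the three-term bound \eqref{eq:generic_indicator_three_term} we take $\delta=h^{\arate/(1+\brate)}$. This choice is admissible, i.e.\ $\delta\le\delta_0$, once $h\le\delta_0^{(1+\brate)/\arate}$. With it, $\delta^{\brate}=h^{\arate\brate/(1+\brate)}$ and $h^{\arate}/\delta=h^{\arate\brate/(1+\brate)}$. The remaining term satisfies $h^{\grate}/\delta\le h^{\arate}/\delta$ for $h\le1$. This gives $\ePhis(h)\le C h^{(3/2-\xi)\brate/(1+\brate)}$, with $C$ depending on $\xi$ through $C_\xi$.

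\textbf{Main obstacle.} The only step needing care is confirming that the state rate $\grate=3/2$ holds on the adaptive three-scale mesh $\{h,h^2,h^3\}$ in the $L^1$-supremum form required by \eqref{eq:mesh_point_rate}, including the point $T$. If the cited proposition is stated in a different norm, I would fall back to an $L^2$ bound and Jensen's inequality. Even if only a slightly smaller $\grate\ge\arate$ were available, the conclusion is unchanged, because only $\grate\ge\arate$ is used.
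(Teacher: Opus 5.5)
Your proposal is correct and takes the same route as the paper, which obtains the corollary by directly instantiating Theorem~\ref{thm:generic_indicator_rate} with $\alpha = 3/2-\xi$ from Theorem~\ref{thm:comm_order15_exit}, $\gamma = 3/2$ (so $\gamma \ge \alpha$), and $\delta = h^{\alpha/(1+\beta)}$. Your checks of the admissible range $h \le \delta_0^{(1+\beta)/\alpha}$ and of the $\xi$-dependence of $C$ make explicit what the paper leaves implicit.
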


Already at $\brate=1$ (bounded density), this gives order near $3/4$, a substantial gain over EM's $1/4$. When the commutativity conditions hold, the order-$1.5$ rate improves on the order-$1$ rate for any $\brate$, without requiring L\'evy-area simulation. 

The stopped-cost rates follow from Theorem~\ref{thm:stopped_functional} and coincide with the indicator rates.
\section{STRONG APPROXIMATION IN CHANCE-CONSTRAINED PATH-INTEGRAL CONTROL}
\label{sec:ccpic}

We now specialize the framework of Section~\ref{sec:problem} to control-affine dynamics with control-quadratic running cost.

\subsection{Path-integral-solvable subclass}

Restrict~\eqref{eq:controlled_sde} to control-affine drift, control-quadratic running cost, and the matching condition,
\begin{subequations}
\begin{align}
\label{eq:control_affine}
\mathbf F(\boldsymbol x,\mathbf u)
&=
\mathbf f(\boldsymbol x) + \mathbf G(\boldsymbol x)\,\mathbf u \\
\label{eq:running_cost}
\ell(\boldsymbol x,\mathbf u,t)
&=
\tfrac{1}{2}\,\mathbf u^\top \mathbf R(\boldsymbol x)\,\mathbf u
+ q(\boldsymbol x,t) \\
\label{eq:matching}
\mathbf B\mathbf B^\top
&=
\lambda\,\mathbf G\mathbf R^{-1}\mathbf G^\top
\end{align}
\end{subequations}
with \(\mathbf R\succ 0\) and \(\lambda>0\). The Cole--Hopf transform \(\xi=\exp(-V/\lambda)\) linearizes the HJB equation~\eqref{eq:hjb}. The Feynman--Kac lemma gives $\xi$ as an expectation over the uncontrolled process $\hat{\boldsymbol X}_t$ (i.e.,~\eqref{eq:controlled_sde} with $\mathbf u\equiv\mathbf 0$),
\begin{equation}
\label{eq:feynman_kac}
\xi(\boldsymbol x,t;\eta)
=
\mathbb E\Big[
\exp\!\Big(
\!{-}\tfrac{1}{\lambda}\big[\psi^\eta
+\!\textstyle\int_t^{\hat\tau}\! q(\hat{\boldsymbol X}_s,s)\,ds\big]
\Big)\Big]
\end{equation}
where \(\hat\tau := \inf\{s\ge t : \hat{\boldsymbol X}_s\notin D\}\wedge T\). The optimal control is recovered by differentiating~\eqref{eq:feynman_kac}~\cite{theodorou2010generalized,patil2025strong}.

For the chance-constrained specialization of Section~\ref{sec:problem}, the augmented terminal cost, \(\varphi^\eta\) in this case, is given by~\eqref{eq:augmented_terminal} with $\Gamma_1=1, \Gamma_2=0$, and the dual ascent update for \(\eta\) is
\begin{equation}
\label{eq:dual_ascent}
\eta \leftarrow \eta + \gamma_\eta\big(p(\mathbf u^*(\cdot;\eta))-\Delta\big)
\end{equation}
where the failure probability \(p(\mathbf u^*(\cdot;\eta))\) is estimated from the same sampled trajectories used for the control update. Under Assumption~\ref{ass:duality}, strong duality holds and the optimal multiplier \(\eta^*\) satisfying complementary slackness exists~\cite{patil2025strong}.

\subsection{Approximation of the closed-loop failure probability}

Under the path-integral representation~\eqref{eq:feynman_kac}, the failure probability of the optimal control for fixed \(\eta\) is computed from uncontrolled rollouts via importance weighting. Define the exact and numerical exponential weights \cite{patil2025strong}
\begin{align}
\label{eq:weight_exact}
R_\eta &:= \exp(-\Seta/\lambda) \\
\label{eq:weight_numerical}
R_{\eta,h} &:= \exp(-\Setah/\lambda)
\end{align}
The failure probability under the optimal control \(\mathbf u^*(\cdot;\eta)\) is
\begin{equation}
\label{eq:closed_loop_pf}
p(\mathbf u^*(\cdot;\eta))
=
\frac{\mathbb E[R_\eta\,\Phi_T]}{\mathbb E[R_\eta]}
\end{equation}
The numerical estimate $p_h(\mathbf u^*(\cdot;\eta))$\label{eq:closed_loop_pf_numerical} is defined analogously with $(R_\eta,\Phi_T)$ replaced by $(R_{\eta,h},\Phi_{T,h})$.

\begin{figure*}[b]
\centering
\begin{subfigure}[b]{0.24\textwidth}
\includegraphics[width=\textwidth]{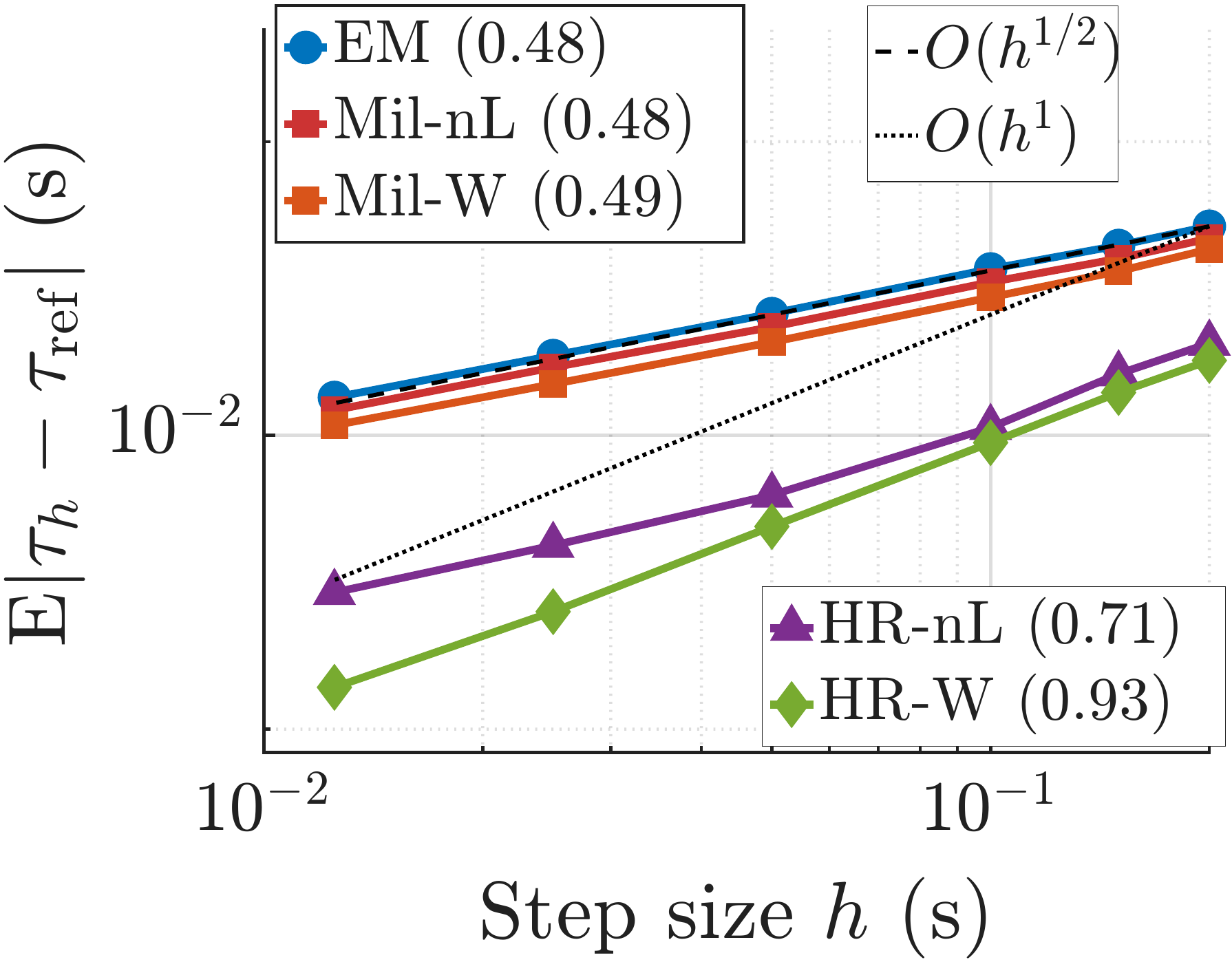}
\caption{Noncomm.: exit-time}
\label{fig:caseA_exit_time}
\end{subfigure}
\hfill
\begin{subfigure}[b]{0.24\textwidth}
\includegraphics[width=\textwidth]{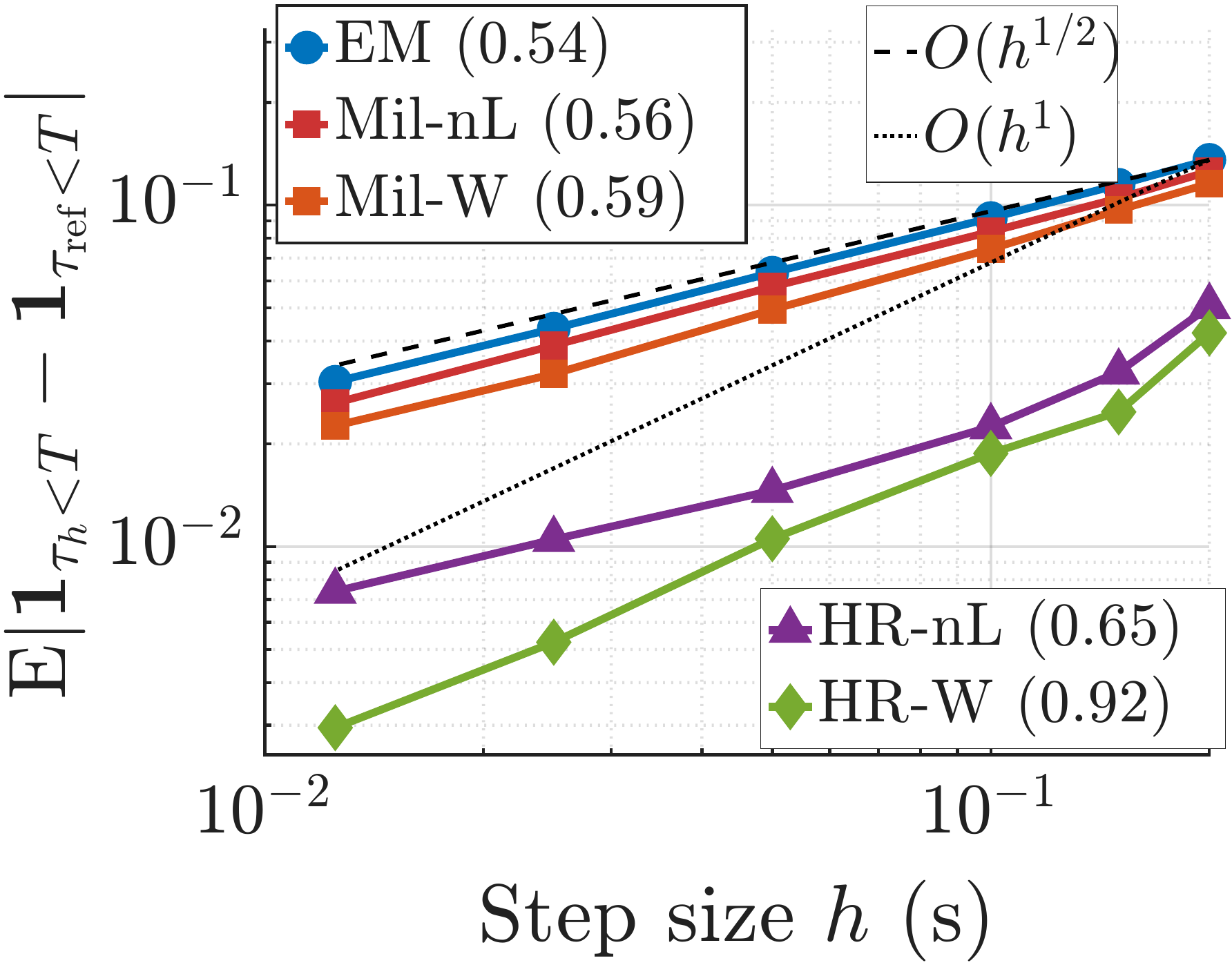}
\caption{Noncomm.: indicator}
\label{fig:caseA_indicator}
\end{subfigure}
\hfill
\begin{subfigure}[b]{0.24\textwidth}
\includegraphics[width=\textwidth]{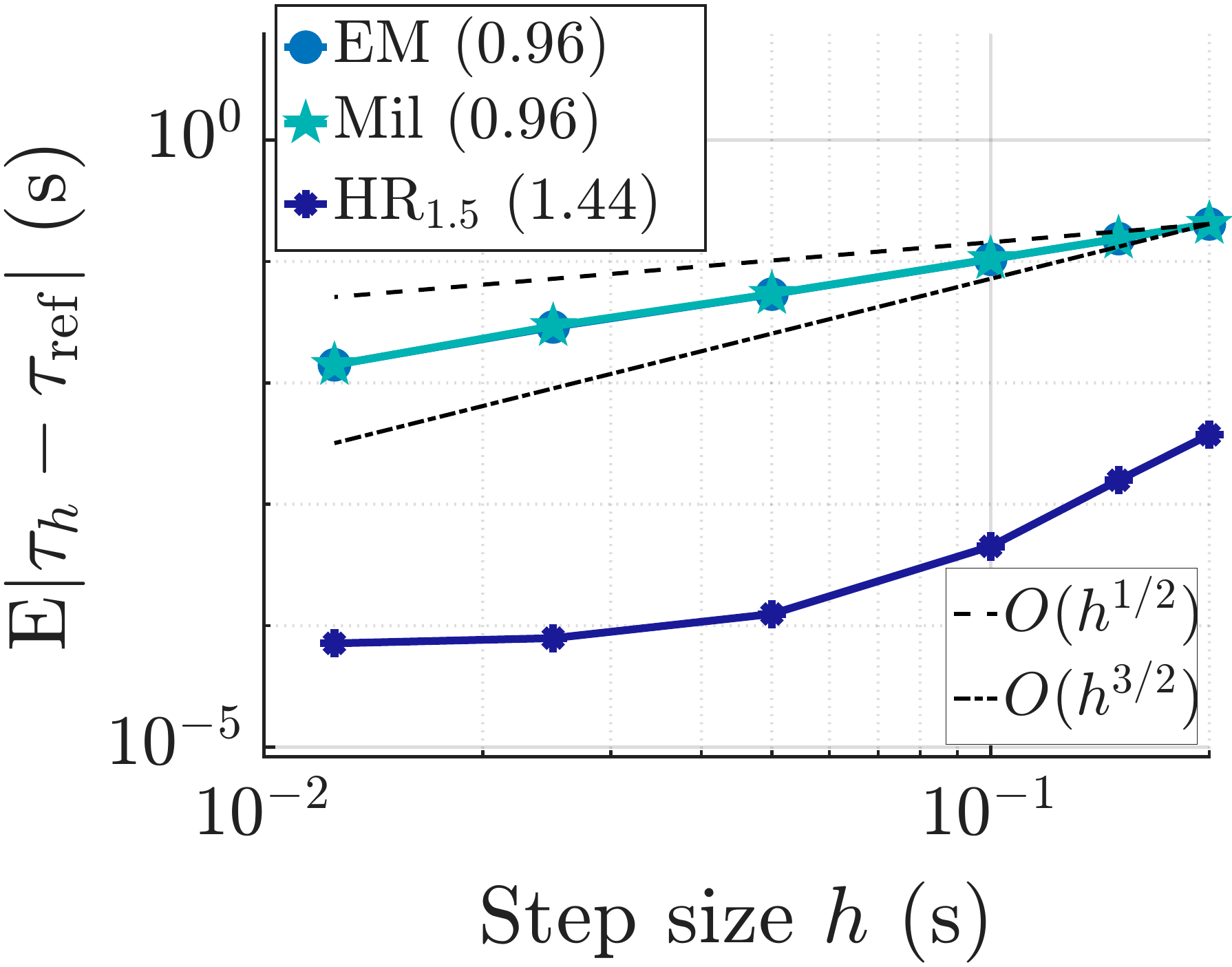}
\caption{Comm.: exit-time}
\label{fig:caseB_exit_time}
\end{subfigure}
\hfill
\begin{subfigure}[b]{0.24\textwidth}
\includegraphics[width=\textwidth]{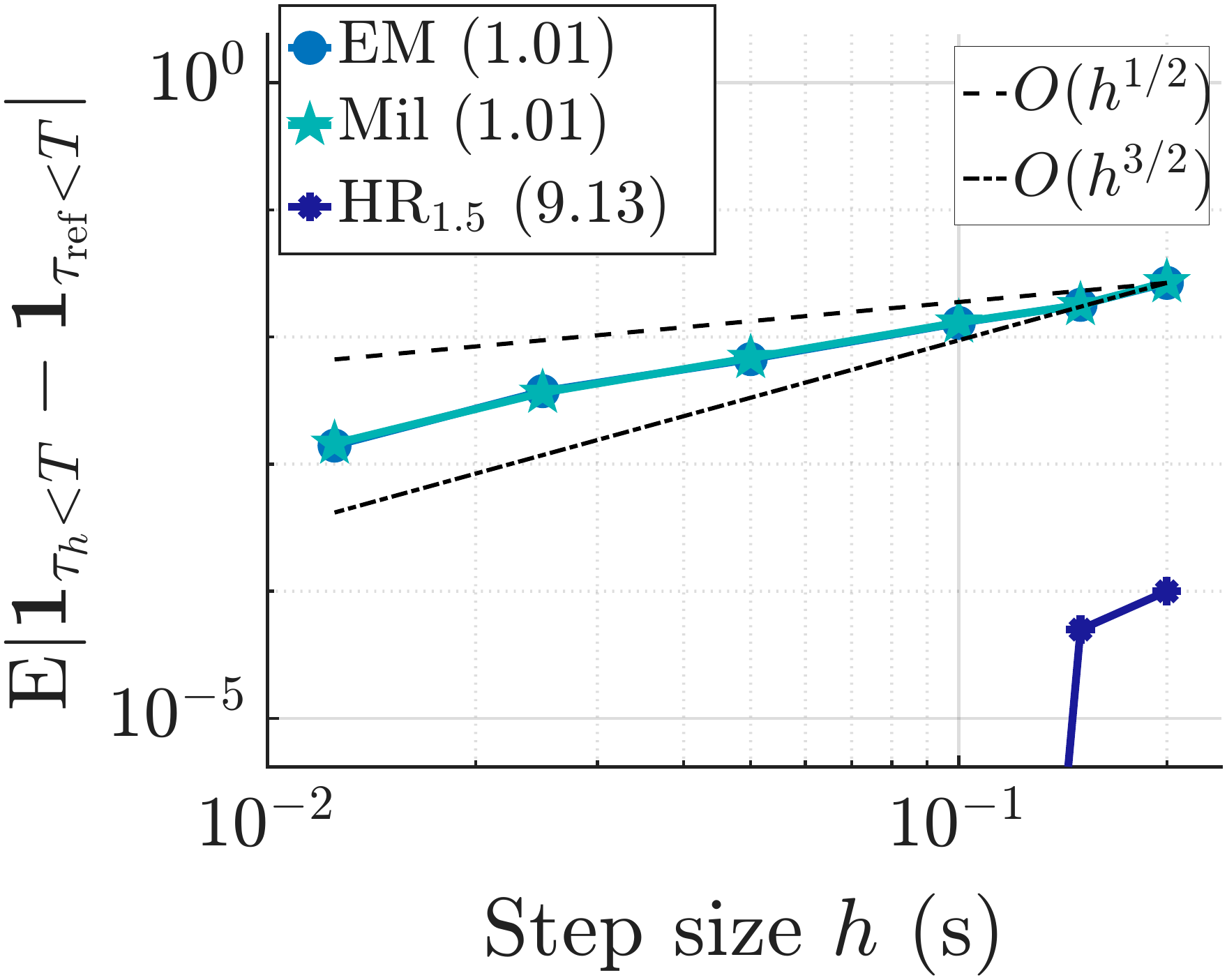}
\caption{Comm.: indicator}
\label{fig:caseB_indicator}
\end{subfigure}
\caption{Strong errors versus step size $h$. Fitted slopes reported in legends. Adaptive higher-order methods achieve best rates.}
\label{fig:strong_errors}
\end{figure*}

\begin{figure*}[b]
\centering
\begin{subfigure}[b]{0.24\textwidth}
\includegraphics[width=\textwidth]{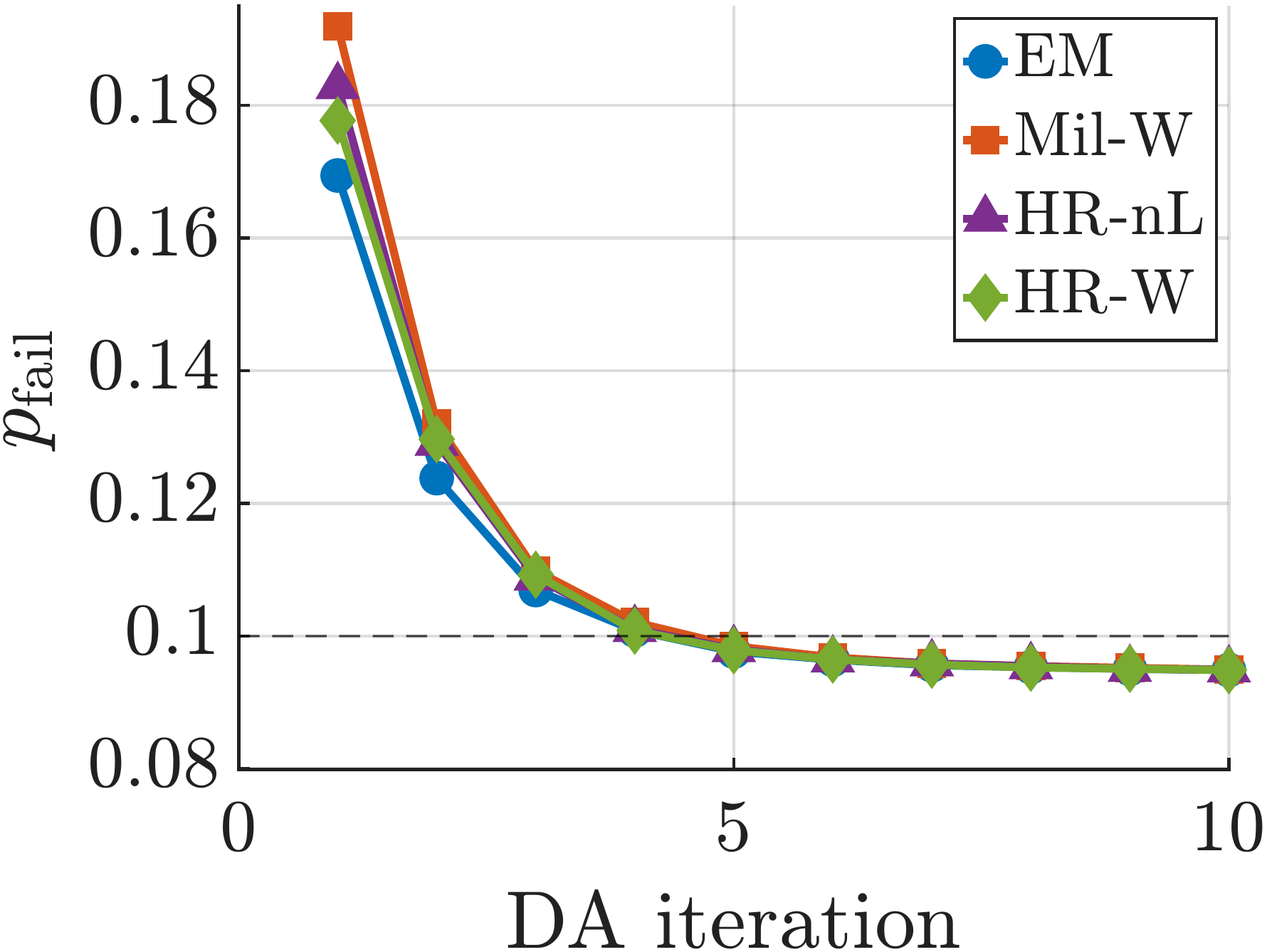}
\caption{Noncomm.: dual ascent}
\label{fig:caseC_da_noncomm}
\end{subfigure}
\hfill
\begin{subfigure}[b]{0.24\textwidth}
\includegraphics[width=\textwidth]{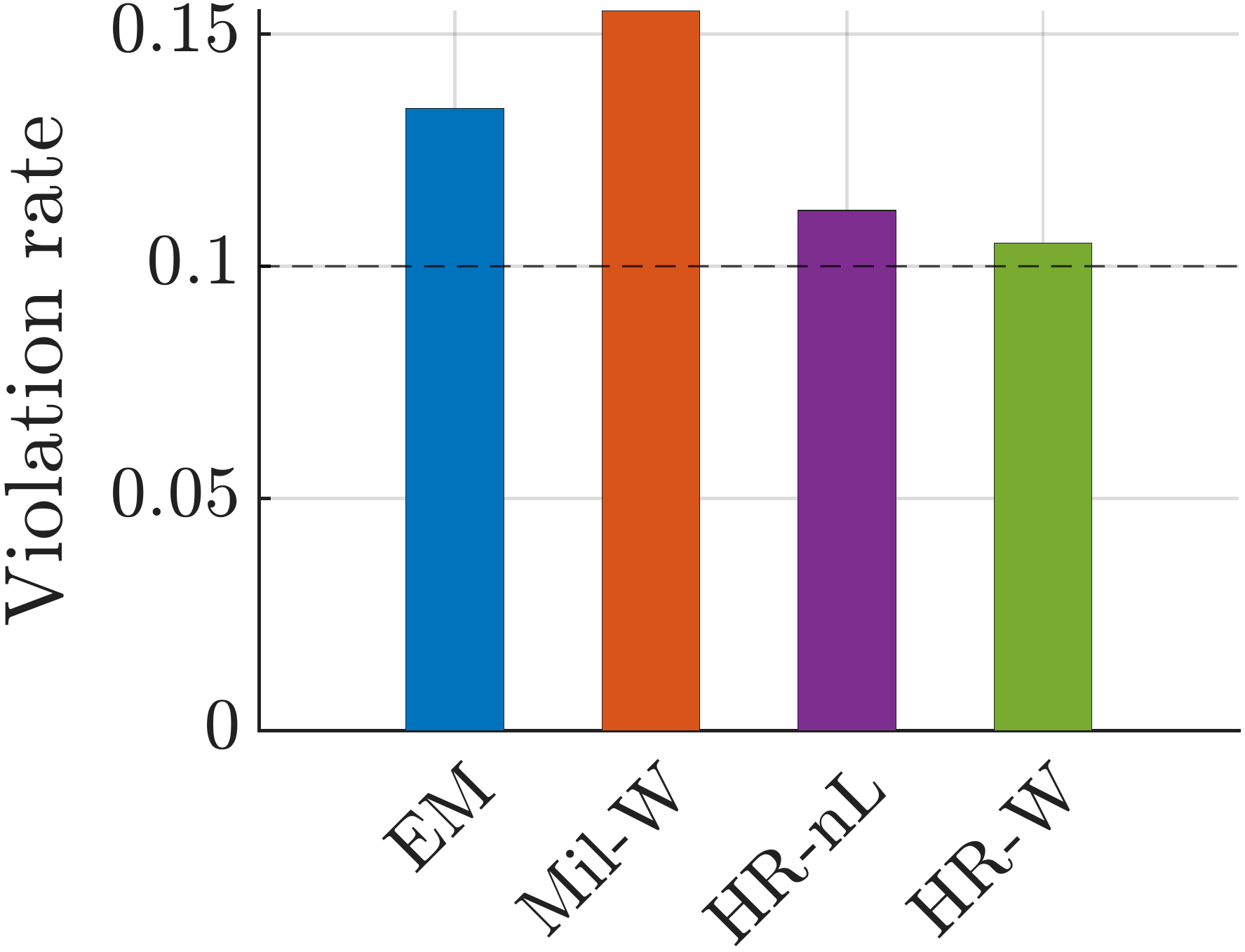}
\caption{Noncomm.: violation}
\label{fig:caseC_viol_noncomm}
\end{subfigure}
\hfill
\begin{subfigure}[b]{0.24\textwidth}
\includegraphics[width=\textwidth]{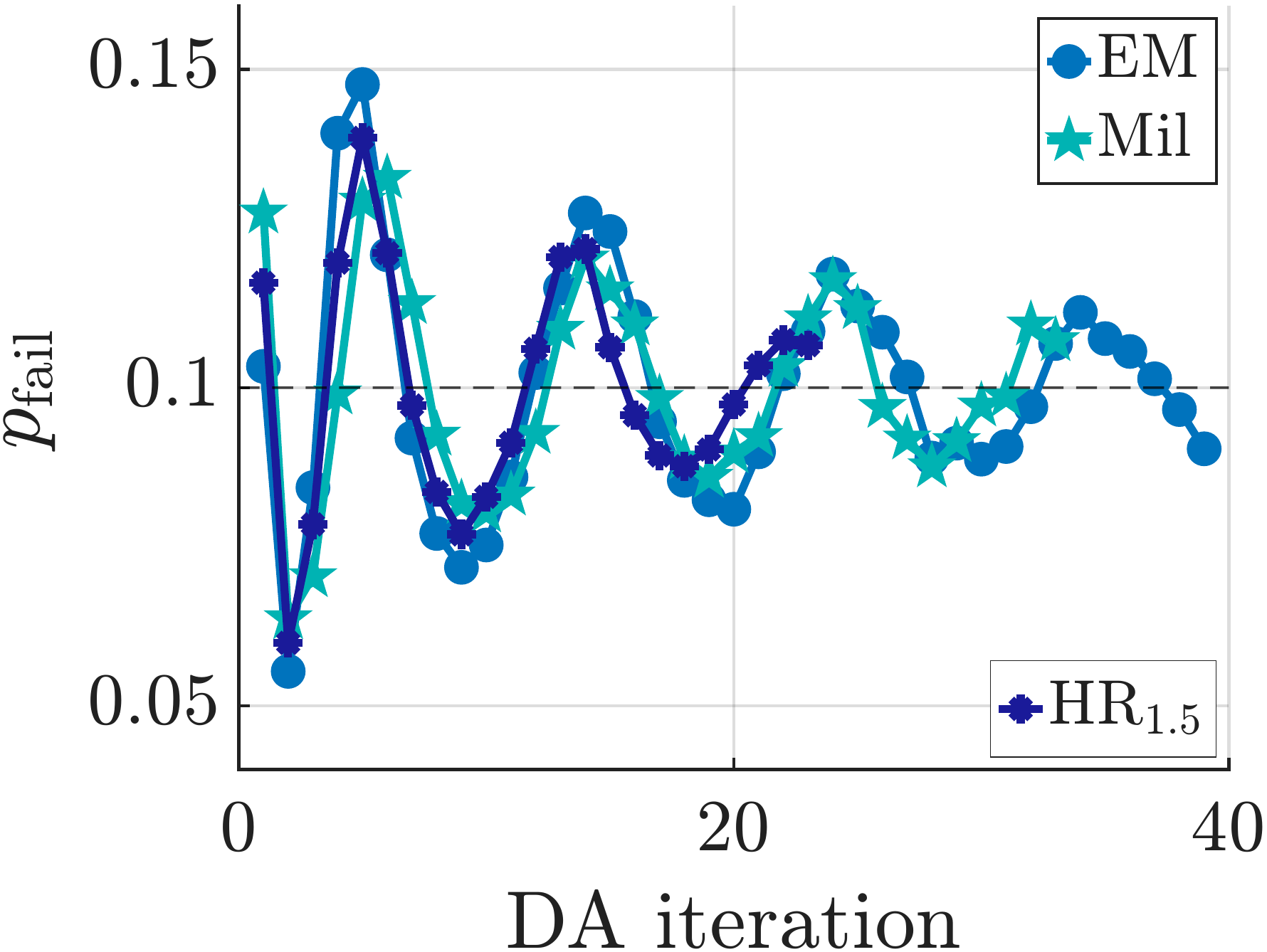}
\caption{Comm.: dual ascent}
\label{fig:caseC_da_comm}
\end{subfigure}
\hfill
\begin{subfigure}[b]{0.24\textwidth}
\includegraphics[width=\textwidth]{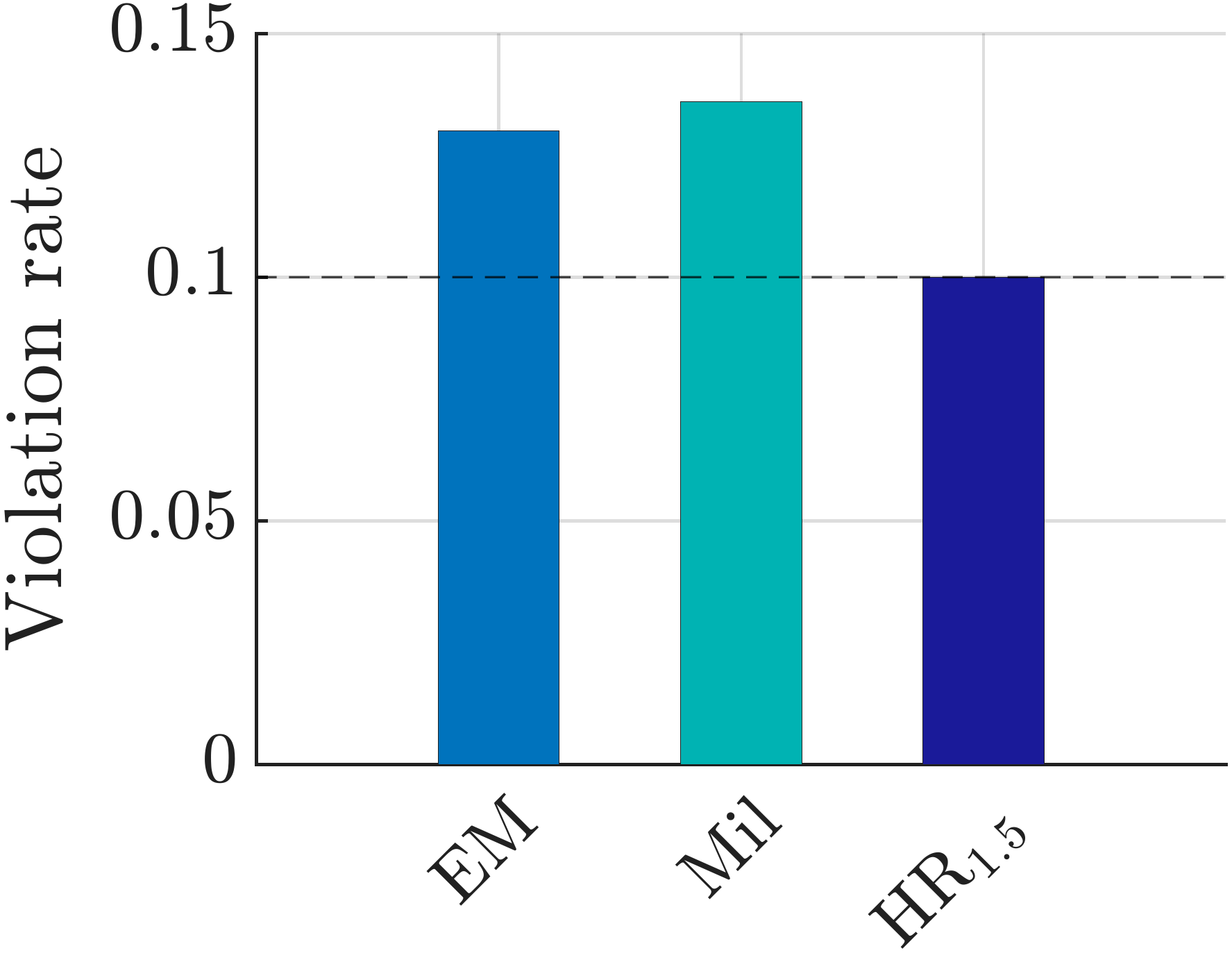}
\caption{Comm.: violation}
\label{fig:caseC_viol_comm}
\end{subfigure}
\caption{Dual-ascent convergence and empirical violation rates. Proposed methods achieve tighter constraint satisfaction at $\Delta=0.10$.}
\label{fig:caseC_da_viol}
\end{figure*}

\begin{corollary}[Closed-loop failure probability approximation]
\label{cor:closed_loop_pf}
Under Assumptions~\ref{ass:exit_time_rate}--\ref{ass:duality} and the matching condition~\eqref{eq:matching}, suppose \(|\Seta|\vee|\Setah|\le M\) almost surely and \(\mathbb E[R_\eta]\ge m>0\), \(\mathbb E[R_{\eta,h}]\ge m>0\), with \(m=e^{-M/\lambda}\) and \(\grate\ge\arate\). Then,
\begin{alignat}{2}
\label{eq:weight_rate}
&\mathbb E|R_\eta - R_{\eta,h}|
\le
\tfrac{e^{M/\lambda}}{\lambda}\,\mathbb E|\Seta - \Setah|
&\le\;
C\,h^{\krate} \\
\label{eq:weighted_indicator_rate}
&\mathbb E\big|R_\eta\,\Phi_T - R_{\eta,h}\,\Phi_{T,h}\big|
&\le\;
C\,h^{\krate}
\end{alignat}
Consequently, $|p(\mathbf u^*) - p_h(\mathbf u^*)| \le C\,h^{\krate}$.
\end{corollary}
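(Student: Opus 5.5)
The argument has three steps. First, a mean-value estimate for the exponential gives \eqref{eq:weight_rate}. Second, a product split combined with Theorem~\ref{thm:generic_indicator_rate} gives \eqref{eq:weighted_indicator_rate}. Third, a ratio (quotient) perturbation estimate yields the bound on $|p(\mathbf u^*)-p_h(\mathbf u^*)|$.

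For \eqref{eq:weight_rate}, both $\Seta$ and $\Setah$ lie in $[-M,M]$ almost surely. The map $s\mapsto e^{-s/\lambda}$ has derivative bounded in absolute value by $\lambda^{-1}e^{M/\lambda}$ on that interval. The mean value theorem therefore gives, pathwise,
\[
|R_\eta-R_{\eta,h}|\le \tfrac{e^{M/\lambda}}{\lambda}\,|\Seta-\Setah| .
\]
Taking expectations and applying Corollary~\ref{cor:leading_rate} gives $C\,h^{\krate}$. The hypothesis $\grate\ge\arate$ ensures that the indicator term in Theorem~\ref{thm:stopped_functional} is governed by \eqref{eq:generic_indicator_rate}.

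For \eqref{eq:weighted_indicator_rate}, I split
\[
R_\eta\Phi_T-R_{\eta,h}\Phi_{T,h}=(R_\eta-R_{\eta,h})\Phi_T+R_{\eta,h}(\Phi_T-\Phi_{T,h}).
\]
Since $0\le\Phi_T\le 1$, the first term contributes at most $\mathbb E|R_\eta-R_{\eta,h}|\le Ch^{\krate}$. Since $0<R_{\eta,h}\le e^{M/\lambda}$, the second term contributes at most $e^{M/\lambda}\,\ePhis(h)$. By Theorem~\ref{thm:generic_indicator_rate}, $\ePhis(h)\le Ch^{\arate\brate/(1+\brate)}\le Ch^{\krate}$, because $\krate\le\arate\brate/(1+\brate)$ by definition and $h\le 1$.

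For the ratio, write $N=\mathbb E[R_\eta\Phi_T]$, $Z=\mathbb E[R_\eta]$, and define $N_h$, $Z_h$ analogously. Then
\[
\frac NZ-\frac{N_h}{Z_h}=\frac{N-N_h}{Z}+\frac{N_h(Z_h-Z)}{Z\,Z_h}.
\]
Using $Z,Z_h\ge m$ and $0\le N_h\le Z_h$ (the latter because $\Phi_{T,h}\le 1$), the second term is at most $|Z_h-Z|/m$. Hence
\[
|p-p_h|\le m^{-1}\big(|N-N_h|+|Z-Z_h|\big)\le m^{-1}\big(\mathbb E|R_\eta\Phi_T-R_{\eta,h}\Phi_{T,h}|+\mathbb E|R_\eta-R_{\eta,h}|\big)\le Ch^{\krate},
\]
with $m^{-1}=e^{M/\lambda}$ absorbed into $C$.

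The main obstacle is justifying that the earlier theorems apply to the setting of \eqref{eq:closed_loop_pf}. Those theorems are stated for the nominal closed-loop dynamics, whereas the expectations here are taken over the uncontrolled rollouts $\hat{\boldsymbol X}$ (nominal feedback $\boldsymbol\mu\equiv 0$) with the cost $\Seta$ built from the path-integral data. I would argue as follows. The regularity conditions of Section~\ref{sec:regularity} and Assumptions~\ref{ass:exit_time_rate}--\ref{ass:anticoncentration} are imposed for this rollout process. Assumption~\ref{ass:duality} together with the matching condition \eqref{eq:matching} guarantees the importance-weighting identity \eqref{eq:closed_loop_pf}. Under those two points, every expectation above is over the same uncontrolled process, and the estimates transfer directly.
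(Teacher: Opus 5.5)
Your proposal is correct and follows the paper's proof essentially step for step. It uses the Lipschitz (mean-value) bound for the exponential on $[-M,M]$ combined with Theorem~\ref{thm:stopped_functional} and Corollary~\ref{cor:leading_rate}, the same product split with Theorem~\ref{thm:generic_indicator_rate}, and the same quotient perturbation inequality with both denominators bounded below by $m$. Your use of $0\le N_h\le Z_h$ to drop the factor $1/Z_h$, your explicit observation that $\kappa\le\alpha\beta/(1+\beta)$ lets the indicator rate be absorbed into $h^{\kappa}$, and your remark that the assumptions must hold for the uncontrolled rollout process are minor clarifications of points the paper leaves implicit.
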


\begin{proof}
For~\eqref{eq:weight_rate}, the exponential is Lipschitz on \([-M,M]\) with constant \(e^{M/\lambda}/\lambda\), so
\[
|R_\eta - R_{\eta,h}|
\le
\frac{e^{M/\lambda}}{\lambda}\,|\Seta - \Setah|
\]
and Theorem~\ref{thm:stopped_functional} gives the rate. For~\eqref{eq:weighted_indicator_rate}, decompose
\[
|R_\eta\,\Phi_T - R_{\eta,h}\,\Phi_{T,h}|
\le
|R_\eta - R_{\eta,h}|
+
|R_{\eta,h}|\,|\Phi_T - \Phi_{T,h}|
\]
and apply~\eqref{eq:weight_rate} and Theorem~\ref{thm:generic_indicator_rate}. Finally, apply the inequality
\[
\bigg|\frac{A}{B}-\frac{\widehat A}{\widehat B}\bigg|
\le
\frac{|A-\widehat A|}{|B|}
+
\frac{|\widehat A|\,|B-\widehat B|}{|B|\,|\widehat B|}
\]
with \(A=\mathbb E[R_\eta\Phi_T]\), \(B=\mathbb E[R_\eta]\), and \(B,\widehat B\ge m\).
\end{proof}

\begin{remark}
Section~\ref{sec:regularity} requires uniform ellipticity and a $C^4$ boundary. In practice, many systems of interest are hypoelliptic (e.g., unicycles) and operate in domains whose boundaries are not $C^4$. In Section~\ref{sec:results}, we test whether the predicted rate improvements persist in practical settings.
\end{remark}

\subsection{Controller algorithm}

Algorithm~\ref{alg:ccpic} states the controller loop. The higher-order propagator \(\mathcal{P}\) on line~5 is the only component that differs between the two cases, with the running-cost integration matched to the order as discussed in Section~\ref{sec:higher_order_methods}.
\vspace{1em}
\begin{algorithm}[H]
\caption{CC-PIC with higher-order propagation}
\label{alg:ccpic}
\begin{algorithmic}[1]
\REQUIRE Initial state $\mathbf{x}_0$, horizon $T$, rollout budget $K$, tolerance $\Delta$, step sizes, $\gamma_\eta,\lambda$, propagator $\mathcal{P}$
\STATE Initialize multiplier $\eta\ge 0$
\REPEAT
\FOR{$k=1,\dots,K$}
  \STATE Sample noise $\{\boldsymbol\epsilon_n^{(k)}\}$
  \STATE Propagate $\hat{\boldsymbol X}^{(k)}$ using $\mathcal{P}$ (uncontrolled, $\mathbf u\equiv\mathbf 0$)
  \STATE Record $\nu_h^{(k)}$, $\Phi_{T,h}^{(k)}$, and $\Setah^{(k)}$ via~\eqref{eq:stopped_cost_num_compact}
\ENDFOR
\STATE $w^{(k)}\!\leftarrow\!\exp(-\Setah^{(k)}/\lambda)\big/\sum_j\exp(-\Setah^{(j)}/\lambda)$
\STATE Recover control update from weighted samples~\cite{theodorou2010generalized}
\STATE $\eta\leftarrow\eta+\gamma_\eta\big(\sum_k w^{(k)}\Phi_{T,h}^{(k)}-\Delta\big)$
\UNTIL{convergence or receding-horizon shift}
\end{algorithmic}
\end{algorithm}

\section{RESULTS}
\label{sec:results}
\enlargethispage{-\baselineskip}

\begin{figure*}[t]
\centering
\begin{subfigure}[b]{0.24\textwidth}
\includegraphics[width=\textwidth]{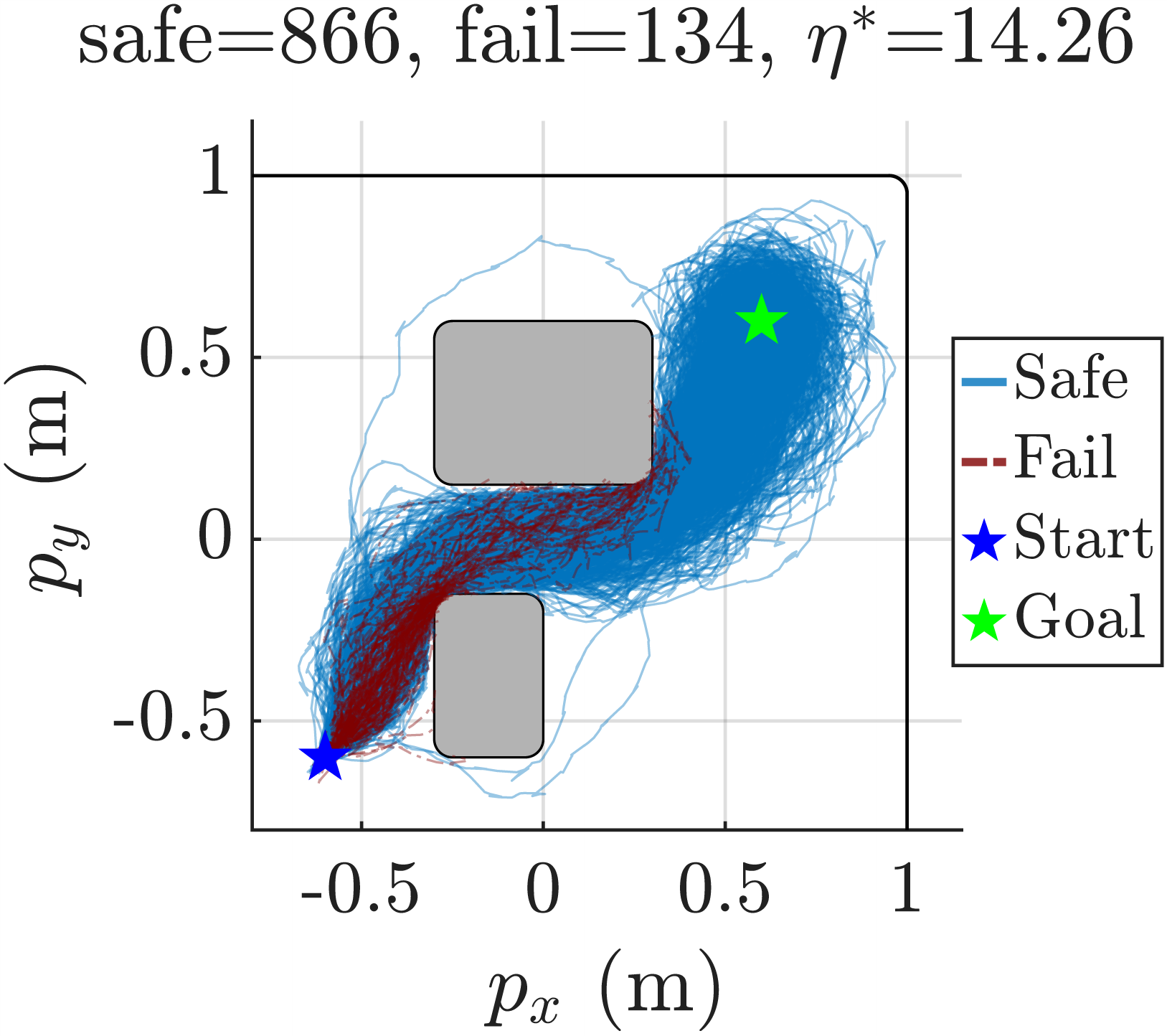}
\caption{Noncomm.: EM}
\label{fig:caseC_traj_noncomm_em}
\end{subfigure}
\hfill
\begin{subfigure}[b]{0.24\textwidth}
\includegraphics[width=\textwidth]{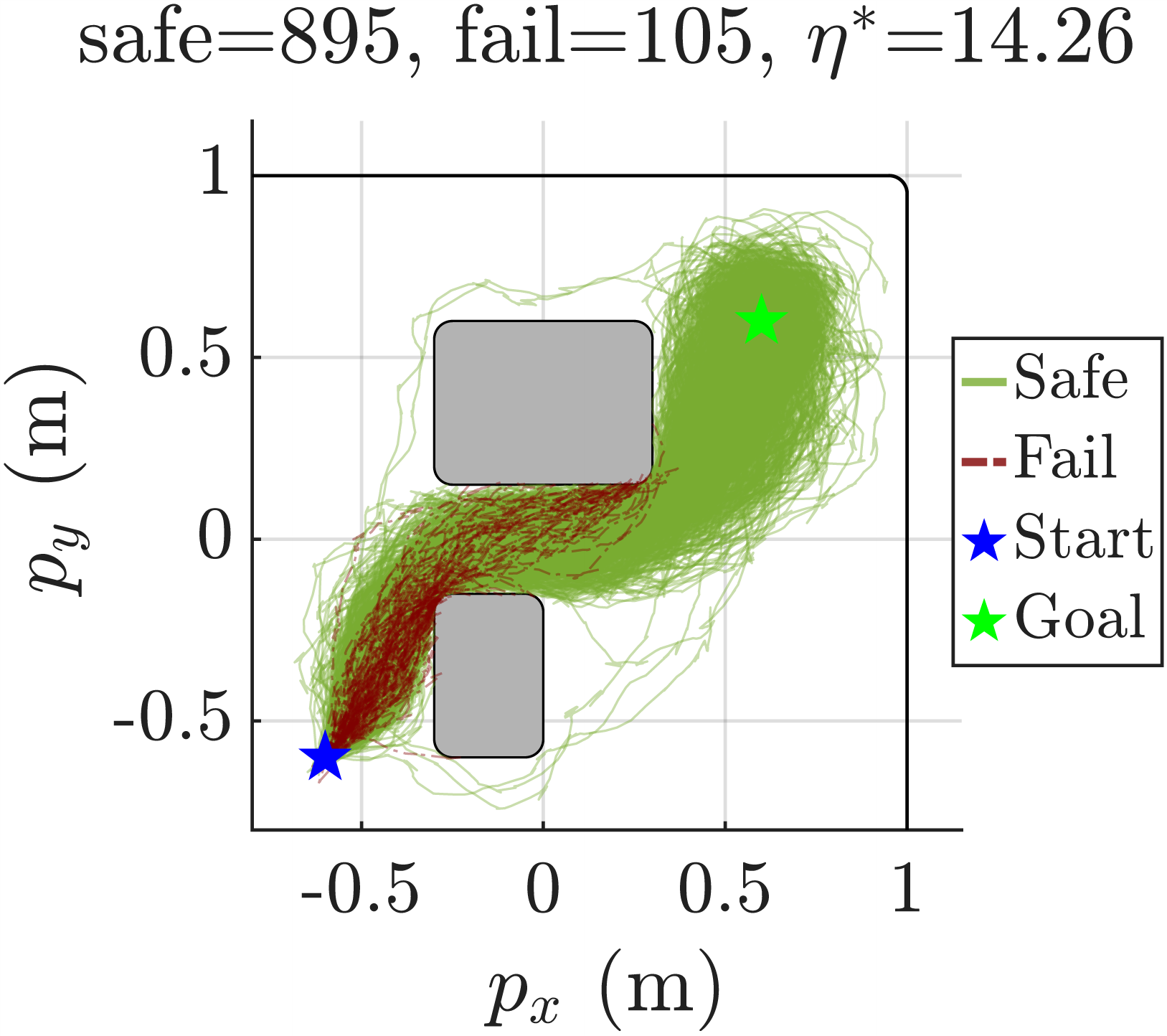}
\caption{Noncomm.: HR-W}
\label{fig:caseC_traj_noncomm}
\end{subfigure}
\hfill
\begin{subfigure}[b]{0.24\textwidth}
\includegraphics[width=\textwidth]{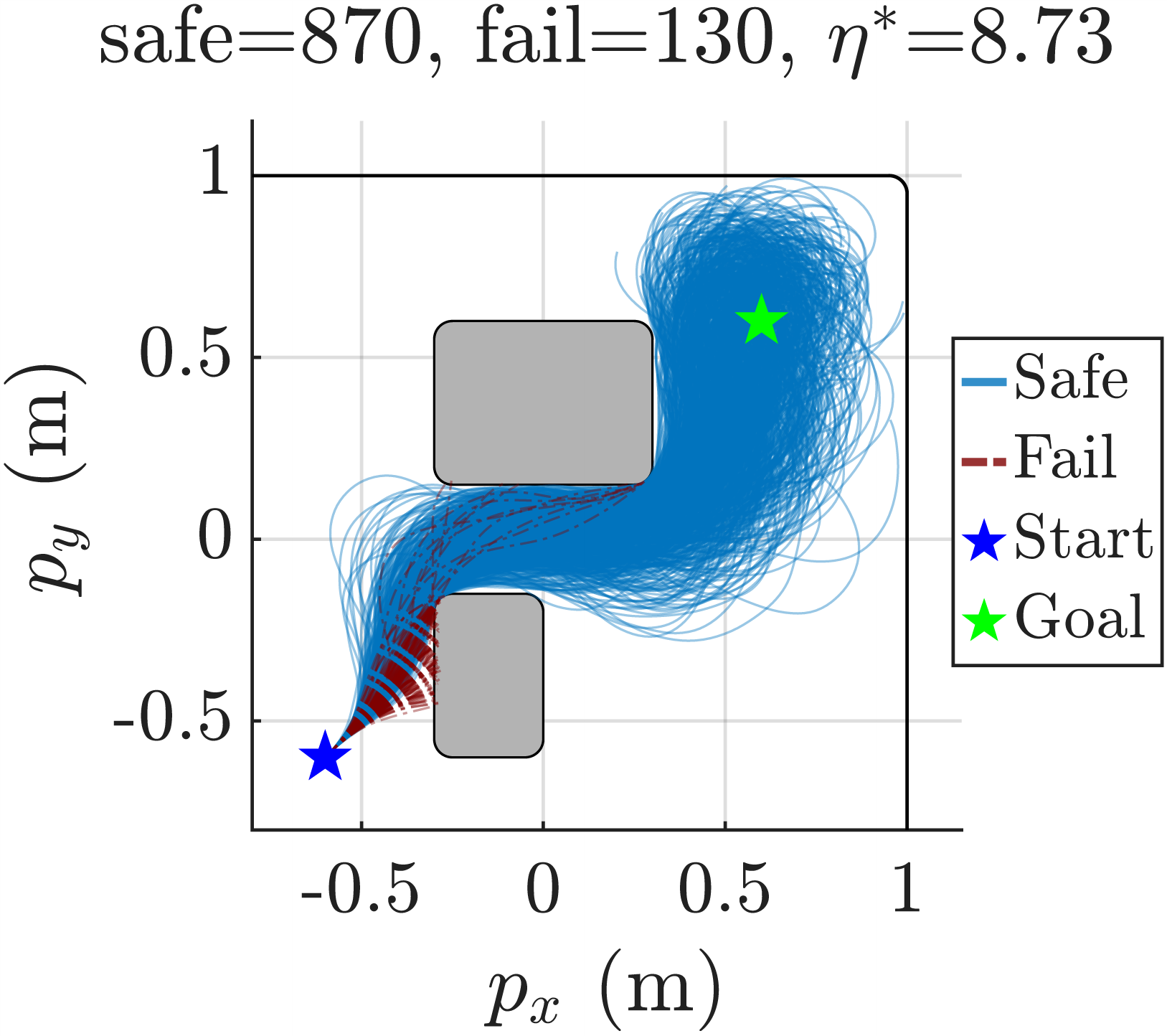}
\caption{Comm.: EM}
\label{fig:caseC_traj_comm_em}
\end{subfigure}
\hfill
\begin{subfigure}[b]{0.24\textwidth}
\includegraphics[width=\textwidth]{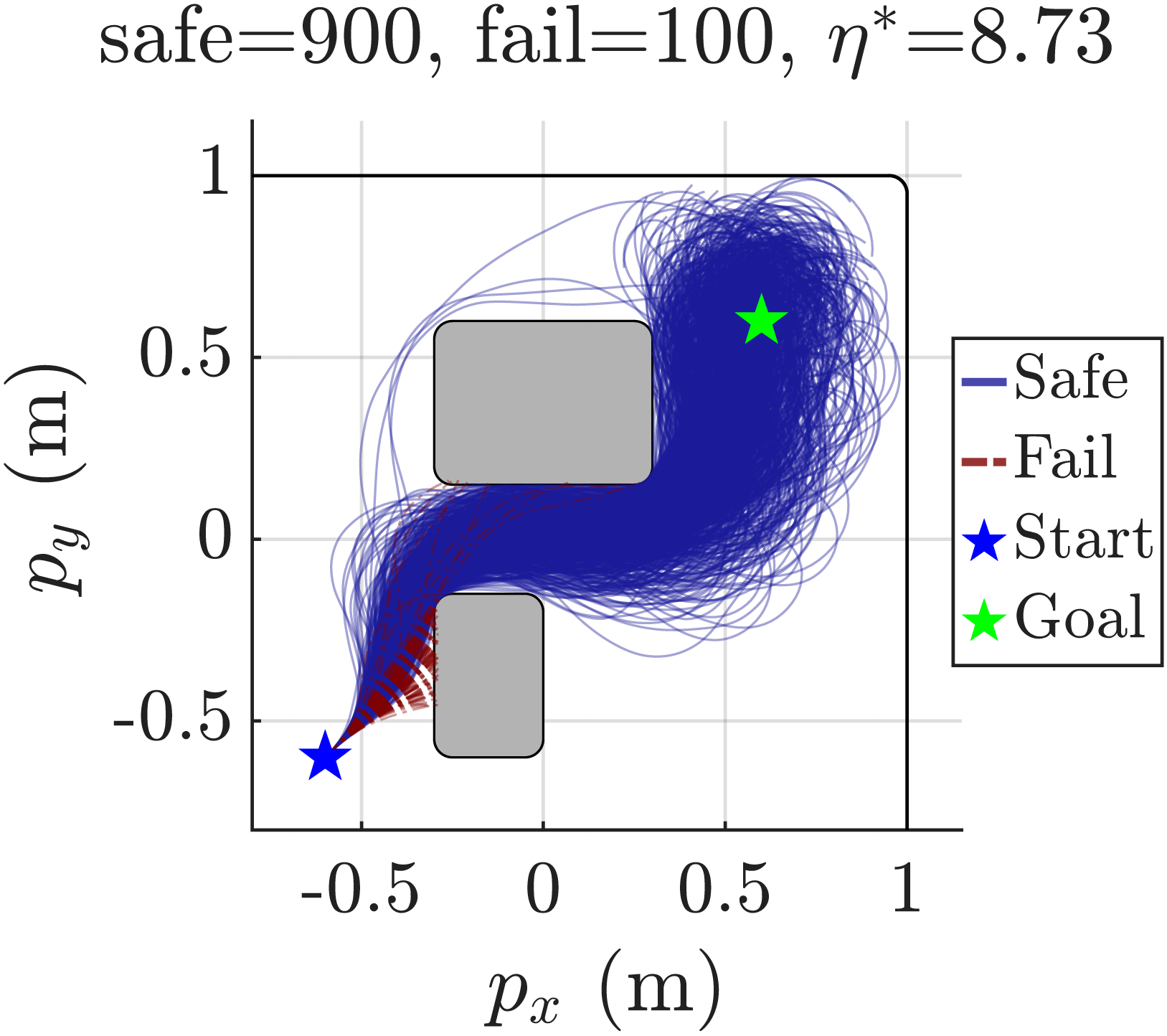}
\caption{Comm.: HR-1.5}
\label{fig:caseC_traj_comm}
\end{subfigure}
\caption{Closed-loop trajectories with gray obstacles. EM (left, blue) has higher failure probability than higher-order methods (right).}
\label{fig:caseC_trajectories}
\end{figure*}

Three experiments evaluate the preceding rates on a unicycle system. Cases~A and~B are offline strong-error benchmarks verifying the convergence rates of Theorems~\ref{thm:generic_indicator_rate}, \ref{thm:noncomm_order1_exit}--\ref{thm:comm_order15_exit}. Case~C evaluates closed-loop constraint satisfaction with CC-PIC.

\subsection{Benchmark models}
\label{subsec:benchmark_models}

We consider two planar unicycle models and two safe sets.

\paragraph{Model A: Noncommutative unicycle.}
In Model~A, the state is $\boldsymbol{X}=[p_x,\,p_y,\,\theta]^\top\in\mathbb{R}^3$ with driftless autonomous dynamics ($\mathbf{F}=\mathbf{G}\mathbf{u}$ in~\eqref{eq:controlled_sde}), given by
\begin{align}
\label{eq:unicycle_dynamics}
d\boldsymbol{X} &=
{\begin{bmatrix} \cos\theta & 0 \\ \sin\theta & 0 \\ 0 & 1 \end{bmatrix}}
{\begin{bmatrix} v \\ \omega \end{bmatrix}}
+ {\begin{bmatrix} \sigma_v\cos\theta & 0 \\ \sigma_v\sin\theta & 0 \\ 0 & \sigma_\omega \end{bmatrix}}
\end{align}
where $\mathbf{u}=[v,\,\omega]^\top$ is the control (linear velocity and yaw rate), with nominal values $v=0.6$, $\omega=0.25$, $\sigma_v=0.35$, $\sigma_\omega=0.80$, and $\boldsymbol{W}\in\mathbb{R}^2$. The matching condition~\eqref{eq:matching} is satisfied with $\mathbf{B}=\mathbf{G}\,\mathrm{diag}(\sigma_v,\sigma_\omega)$, which fixes $\mathbf R=\lambda\,\mathrm{diag}(\sigma_v^{-2},\sigma_\omega^{-2})$. The system is noncommutative, since $L_2 \mathbf{b}_1 = \sigma_\omega\sigma_v[-\sin\theta,\,\cos\theta,\,0]^\top \neq 0$ while $L_1 \mathbf{b}_2 = 0$, so accurate strong approximation beyond order~$0.5$ requires L\'evy-area simulation.

\paragraph{Model B: Lifted-actuator unicycle.}
Model~B promotes the controls $\mathbf{u}$ to states and places noise on the actuator channel. This lifting can apply to any controlled diffusion~\eqref{eq:controlled_sde} and recovers rate-based control as a special case~\cite{kim2022smooth,todorov2005stochastic}. The augmented state $\boldsymbol{Z}=(\boldsymbol{X},\mathbf{u})\in\mathbb{R}^{d+p}$ evolves as
\begin{align}
\label{eq:lifted_general}
d\boldsymbol{X} &= \big[\mathbf{f}(\boldsymbol{X}) + \mathbf{G}(\boldsymbol{X})\,\mathbf{u}\big]\,dt \nonumber\\
d\mathbf{u} &= \boldsymbol{\Lambda}(\mathbf{u})\,\boldsymbol\alpha\,dt + \sqrt{\lambda}\,\boldsymbol{\Lambda}(\mathbf{u})\,\mathbf{M}\,d\boldsymbol{W},
\end{align}
where $\boldsymbol\alpha$ is the new control input, $\boldsymbol{\Lambda}(\mathbf{u})=\mathrm{diag}(\gamma_i(u_i))$ with $\gamma_i(u_i)=\sqrt{\sigma_{0,i}^2+\sigma_{1,i}^2\,u_i^2}$, and $\mathbf{M}\mathbf{M}^\top=\mathbf{R}^{-1}$. The physical block is deterministic. Because each diffusion column depends only on its own coordinate $u_i$, the system is commutative, enabling the order-$1.5$ method without L\'evy areas. Applied to~\eqref{eq:unicycle_dynamics}, this gives $\boldsymbol{Z}\in\mathbb{R}^5$ with $\sigma_{0,v}=0.30$, $\sigma_{1,v}=0.10$, $\sigma_{0,\omega}=0.20$, $\sigma_{1,\omega}=0.10$.

\paragraph{Safe sets.}
Cases~A and~B use a disk of radius $R=1.0$ with a circular obstacle removed ($C^4$ boundary, $T=1.2\,\mathrm{s}$ and $T=5.0\,\mathrm{s}$, respectively, due to the slower exit rate of the lifted model). Case~C uses a rounded rectangle $[-1,1]^2$ with two rectangular obstacles forming a narrow passage ($r=0.05$ corner rounding, $T=7.0\,\mathrm{s}$). A superellipse approximation could recover $C^4$ regularity, but is not pursued here.

\paragraph{Strong-error methodology (Cases A and B).}
Errors are computed by paired-path Monte Carlo with $N=20{,}000$ samples sharing a common Brownian path. $21.8\%$ of reference paths exit in Case~A and $90.8\%$ in Case~B, ensuring nontrivial indicator statistics. The EM reference step size is $h_{\mathrm{ref}}=h_{\min}^2/10$ for Case~A and $h_{\mathrm{ref}}=h_{\min}^3/2$ for Case~B. 

\paragraph{Methods.}
For Model~A, Euler--Maruyama (EM), Milstein without L\'evy areas (Mil-noL), Milstein with Wiktorsson (Mil-W), adaptive HR without L\'evy areas (HR-noL), and adaptive HR with Wiktorsson (HR-W) were considered. For Model~B, EM, Milstein (Mil), and adaptive order-$1.5$ (HR-1.5) were considered.

\subsection{Case~A: Noncommutative strong error}
\label{subsec:results_noncomm}

Figure~\ref{fig:strong_errors} reports the strong errors for both models. For the noncommutative model, HR-W achieves a fitted exit-time slope of $0.93$, approaching the order-$1$ rate of Theorem~\ref{thm:noncomm_order1_exit}, while EM achieves $0.48$ (order~$1/2$). The indicator error (HR-W: $0.92$, EM: $0.54$) is similarly consistent with Theorem~\ref{thm:generic_indicator_rate}.

\subsection{Case~B: Commutative strong error}
\label{subsec:results_comm}

For the commutative model, HR-$1.5$ achieves a fitted exit-time slope of $1.44$, approaching the theoretical $3/2$ rate (Theorem~\ref{thm:comm_order15_exit}), while EM and Mil both achieve $\approx 0.96$ (order~$1$). The indicator error shows HR-$1.5$ at $9.13$, effectively vanishing at these step sizes.

\subsection{Case~C: CC-PIC}
\label{subsec:results_pic}

The CC-PIC controller (Alg.~\ref{alg:ccpic}) is tested with $K=100{,}000$ rollouts, $K_{\mathrm{ctrl}}=50{,}000$ rollouts in closed loop, $\lambda=1.0$, $\Delta=0.10$, $h=0.05$, $T=7.0\,\mathrm{s}$, $N_{\mathrm{episodes}}=1{,}000$.

In practice, the dual ascent~\eqref{eq:dual_ascent} uses Adam in place of gradient ascent, preceded by a binary search to initialize $\eta$ near the optimal region. 
For the noncommutative model (Fig.~\ref{fig:caseC_traj_noncomm}), at the converged multiplier $\eta^*=14.26$, EM produces $13.4\%$ violations while HR-W achieves $10.5\%$, near the threshold $\Delta=0.10$. Violation rates for all methods appear in Fig.~\ref{fig:caseC_viol_noncomm}. For the commutative model (Fig.~\ref{fig:caseC_traj_comm}), at $\eta^*=8.73$, EM yields $13.0\%$ violations while HR-$1.5$ achieves exactly $10.0\%$, matching $\Delta$. Dual-ascent convergence and violation rates are shown in Figs.~\ref{fig:caseC_da_comm} and~\ref{fig:caseC_viol_comm}. 

\subsection{Comparison of discretization methods}
\label{subsec:route_comparison}

\begin{table}[H]
\centering
\caption{Comparison of exit-functional propagators.}
\label{tab:method_summary}
\footnotesize
\begin{tabular}{@{}lccc@{}}
\toprule
& \textbf{EM/Mil} & \textbf{HR-W} & \textbf{HR-1.5} \\
\midrule
Diffusion & Any & General & $[b_j, b_k]=0$ \\
Exit-time rate & $O(h^{1/2})$ & $O(h^{1-\xi})$ & $O(h^{3/2-\xi})$ \\
Indicator rate & $O(h^{\frac{\beta}{2(1+\beta)}})$ & $O(h^{\frac{(1-\xi)\beta}{1+\beta}})$ & $O(h^{\frac{(3/2-\xi)\beta}{1+\beta}})$ \\
Mesh scales & $\{h\}$ & $\{h,\,h^2\}$ & $\{h,\,h^2,\,h^3\}$ \\
L\'evy areas & None & $p\asymp h^{-1/2}$ & None \\
L\'evy-step cost & $O(1)$ & $O(m^2 h^{-1})$ & $O(1)$ \\
Total cost~\cite{hoel2024adaptive} & $O(h^{-1})$ & $O(h^{-1}\!\log h^{-1})$ & $O(h^{-1}\!\log h^{-1})$ \\
\bottomrule
\end{tabular}
\end{table}

Table~\ref{tab:method_summary} summarizes the cases considered. When commutativity holds, the order-$1.5$ case is preferable (half-order improvement, no L\'evy areas, $\brate=1$ sufficient). For noncommutative systems, the order-$1$ Milstein with Wiktorsson improves over EM under $\brate>1$. In both cases, the adaptive methods produced tighter constraint satisfaction than fixed-step counterparts with marginal total cost rate overhead.

\section{CONCLUSION}
\label{sec:conclusion}
This paper studied higher-order exit-functional approximation in safe stochastic optimal control, where safety is enforced through exit events. Under a terminal boundary-layer estimate, higher order bounds for clipped exit-time and mesh-point state approximation were transferred to the failure indicator function and relaxed stopped cost. Two cases were studied. An adaptive order-$1$ Milstein with L\'evy areas for noncommutative dynamics, and adaptive order-$1.5$ for commutative dynamics, with chance-constrained path-integral control as the specialization. Numerical experiments confirmed predicted rates and showed tighter constraint satisfaction over Euler--Maruyama. Open directions include state-dependent exit penalties, sharper exit-functional rates, and investigation of multilevel Monte Carlo methods for estimation of exit functionals.

\bibliographystyle{IEEEtran}
\bibliography{ref}
\end{document}